\newtheorem{fact}{Fact}
\newtheorem{theorem}{Theorem}[section]
\newtheorem{lemma}[theorem]{Lemma}
\newenvironment{Proof}[1]{\noindent{\bf Proof#1.}~}{\hfill$\qed$}
\newenvironment{relemma}[1]{\medskip\noindent{\bf Lemma #1 }\em}{\medskip}
\newcommand{\cA}{{\cal A}}
\newcommand{\cB}{{\cal B}}
\newcommand{\cD}{{\cal D}}
\newcommand{\cF}{{\cal F}}
\newcommand{\cI}{{\cal I}}
\newcommand{\cP}{{\cal P}}
\newcommand{\cS}{{\cal S}}
\newcommand{\cT}{{\cal T}}
\newcommand{\cU}{{\cal U}}
\begin{document}

\title{Broadcasting in Networks of Unknown Topology in the Presence of Swamping
\thanks{A preliminary version of this paper appeared in Proc. 12th International Symposium on Stabilization, Safety, and Security of Distributed Systems (SSS 2010), New York City, USA, September 20-22, Vol 6366, pp. 267-281, LNCS Springer, 2010.}
\thanks{Full proofs for this paper published 2010: \newline\tt{https://scs.carleton.ca/content/communication-networks-spatially-correlated-faults}}}
\author{Evangelos Kranakis
\thanks{School of Computer Science, Carleton University, Ottawa, Ontario, K1S~5B6, Canada.\tt{kranakis@scs.carleton.ca}
}
\and
Michel Paquette (corresponding author)
\thanks{Department of Computer Science, Vanier College, Montreal, Quebec, H4L~3X9, Canada. \tt{michel.paquette@vaniercollege.qc.ca}, Tel:514-744-7500 ext.7620, Fax:514-744-7505}
}

\date{}
\maketitle

\begin{abstract}
In this paper, we address the problem of broadcasting in a wireless network under a novel communication model:
the {\em swamping} communication model.
In this model, nodes communicate only with those nodes at geometric distance greater than $s$ and at most $r$ from them.
Communication between nearby nodes under this model can be very time consuming, as the length of the path between two nodes within distance $s$ is only bounded above by the diameter $D$, in many cases.
For the $n$-node lattice networks, we present algorithms of optimal time complexity, respectively $O(n/r + r/(r-s))$ for the lattice line and $O(\sqrt{n}/r + r/(r-s))$ for the two-dimensional lattice.
We also consider networks of unknown topology of diameter $D$ and of a parameter $g$ ({\em granularity}).
More specifically, we consider networks with $\gamma$ the minimum distance between any two nodes and $g = 1/\gamma$.
We present broadcast algorithms for networks of nodes placed on the line and on the plane with respective time complexities $O(D/l + g^2)$ and $O(Dg/l + g^4)$, where $l \in \Theta(\max\{(1-s),\gamma\})$.
\newline
\newline{\bf Keywords:} sensor network, broadcasting, unknown topology, faults, swamping.
\end{abstract}

\section{Introduction}

One of the known problems commonly faced by radio transceivers is that of {\em swamping} (cf., e.g., \cite{Bergent2002,ICGC,Radcon}).
When two wireless nodes are at close proximity, their receivers cannot adapt to strong incoming signals;
communication becomes difficult, even impossible.
In contrast to traditional radio communication models, nodes at close proximity are not able to communicate directly;
intermediate nodes are needed to relay their messages.
In this paper, we consider a wireless network where nodes suffer from the problem of swamping:
each node cannot receive any message from nodes within distance $s$ of it (the swamping distance) and may correctly receive messages only if no node within distance $s$ from them transmits.

We study analytically the problem of broadcasting in networks where nodes may be suffering from swamping.
We propose broadcasting algorithms for this novel communication model which successfully broadcast in networks of unknown topology.
Moreover, we propose algorithms to broadcast in optimal time complexity in the lattice line and in the two-dimensional lattice.

\subsection{The Model and Problem Definition}
\label{s:ModelProblem}
Typical wireless receivers are built from a radio-frequency amplifier, a demodulator and a decoder.
The amplifier adapts the strength of the received signal such that it becomes usable for the demodulator stage.
However, this amplifier is not ideal.

When the received signal strength is too low its output is either too weak or too noisy to be usable;
the first situation occurs when the communication range of a receiver is exceeded, for instance.
When the received signal strength is too high, its input stage becomes saturated leading to a distorted signal (cf., e.g., \cite{SedraSmith1998});
in this case, we say that the receiver is {\em swamped} (cf., e.g., \cite{Bergent2002,ICGC,Radcon}).
This occurs when there is a radio transmitter which is too close to a receiver.
We now propose our model for this fault phenomenon.
In what follows, whenever we speak of the distance, it is meant in its geometric sense, unless otherwise mentioned.

We work in the swamping communication model.
Our graphs are built from a set $V$ of $|V| = n$ nodes, placed on the line (Sections~\ref{s:LL} and \ref{s:Highway}) or on the plane (Sections~\ref{s:Lattice} and \ref{s:City}).
Nodes are equipped with communication range $r$ and limited by a minimum distance requirement of $s$ (the swamping distance).
Two nodes $u,v \in V$ located at distance $dist(u,v)$ greater than $s$ and at most $r$ from one-another are neighbors and share an undirected link $(u,v) \in E$ in the graph $G$;
no other links exist in $G$.
In each round, each node is either a sender or a receiver.
A node $u$ which is a transmitter in a given round sends a message to the entire set of its neighbors $\Gamma(u)$ within the same round;
this transmission also makes the receiving of messages impossible for all nodes within distance $s$.
More formally, for each round when a node within distance $s$ of it transmits, a node $v$ receives no message;
in this case, only noise is heard by $v$, indistinguishable from the background noise heard when no messages are sent.
In a fixed round, a node $v$ receives a message if and only if it is a receiver, exactly one of its neighbors is a sender, and no node within distance $s$ sends a message.
If no neighbor of $v$ is a sender, then there is no message on the channel which $v$ can receive.
If more than one neighbor of $v$ sends a message, we say that a {\em collision} occurs at $v$ and $v$ can only perceive noise on the channel.
Nodes do not have collision detection abilities, i.e., they cannot distinguish collision noise from background noise (which is apparent when no messages are heard).

The swamping communication model can be viewed as a GRN on which radio
communication is implemented with additional transient reception faults on all nodes at close proximity of a transmitter, i.e., a node cannot receive messages at each round when some node within distance $s$ of it transmits.
Alternately, we can say that all {\em incoming} links of nodes at close proximity to a transmitter fail.
Observe that communication between nearby nodes under this model can be very time comsuming as the length of the path between two nodes within distance $s$ is only bounded above by the diameter $D$, in many cases.

Throughout this paper, we study networks of nodes placed on the line and on the plane which are either designed (sections~\ref{s:LL} and \ref{s:Lattice}) or of {\em unknown topology} (sections~\ref{s:Highway} and \ref{s:City}).
Nodes are {\em location-aware},
i.e., each node knows its own location with respect to some global reference, but all nodes are unaware of the location of any other node.
In the cases where the topology is unknown, we restrict attention to connected networks where nodes are positioned with some minimum distance $\gamma$ from each other.
The parameter $\gamma$ may be related to the physical size of the nodes such that no two could occupy the same space.
Let the parameter $g=1/\gamma$ be called {\em granularity} (as introduced in \cite{2007EGKPPS}).
Nodes are also aware of the parameter $\gamma$, the {\em swamping distance} $s$, and the communication distance $1$.

We consider the process of broadcasting under the {\em spontaneous wake up model} in which all nodes are considered to be awake when the source begins transmission.
Under this model, nodes may contribute to the broadcasting process even before receiving the source message, by exchanging control messages.
In the sequel, we consider that nodes execute algorithms in a synchronous way.

We consider deterministic algorithms without global knowledge (Sections~\ref{s:LL}, \ref{s:Highway}, and \ref{s:Lattice}) and with some knowledge about messages received by nodes close by (Section~\ref{s:City}).
In general, the algorithm is known to all nodes and its execution is based solely on the location of nodes in the network, the history known to each node, and the parameter $g$.
In Section~\ref{s:City}, the algorithm execution is based on the above-mentioned information augmented by the information about messages received by nodes surrounding each node.

\subsection{Our Results}
In Section \ref{s:LL}, we address the problem of broadcasting on the lattice line and show a broadcasting algorithm, $\cA$, which correctly broadcasts the message $m$ on the lattice line of length $n$, in time $\lfloor n/r \rfloor + 3(\lceil r / (r-(s+1)) \rceil + 1)$.
This order of magnitude for the time complexity is optimal.

In Section~\ref{s:Highway}, we provide an algorithm, $\cB$, to correctly broadcasts a message $m$ in a network of unknown topology in the line.
Given a network diameter $D$, a minimum distance between nodes $\gamma$, a granularity parameter $g = 1/\gamma$ and $l = \max\{(1-s),\gamma\}$, Algorithm $\cB$ completes broadcasting in time $O(D/l + g^2)$.

In Section \ref{s:Lattice}, we address the problem of broadcasting on the two-dimensional lattice and show a broadcasting algorithm, $\cA^2$, which correctly broadcasts the message $m$ in the two-dimensional lattice line of length $n$, in time $4\lfloor \sqrt{n}/r \rfloor + 12(\lceil r / (r-(s+1)) \rceil + 1)$.
This order of magnitude for the time complexity is optimal.

In Section~\ref{s:City}, we provide an algorithm, $\cB^2$, to correctly broadcasts a message $m$ in a network of unknown topology in the plane.
Given a network diameter $D$, a minimum distance between nodes $\gamma$, a granularity parameter $g = 1/\gamma$ and $l = \max\{(1-s)/(3\sqrt{2}),\gamma/\sqrt{2}\}$, Algorithm $\cB^2$ completes broadcasting in time $O(Dg/l + g^4)$.

\section{Related Work}
\label{s:RelatedWork}

The fundamental questions of network reliability have received much attention in the context of wired networks, under the assumption that components fail randomly and independently (cf., e.g. \cite{Bienstock1988,CDP1994,CDP1996,PP2006_2} and the survey \cite{Pelc1996}).
On the other hand, empirical work has shown that positive correlation of faults is a more reasonable assumption for networks \cite{GGSE2001,TJ2001,YKT1996}.
In particular, in \cite{YKT1996}, the authors provide empirical evidence that data packets losses are spatially correlated in networks.
Moreover, in \cite{GGSE2001}, the authors state that the environment provides many phenomena that may lead to spatially correlated faults.
More recently, in \cite{KPP2007}, a gap was demonstrated between the fault-tolerance of networks when faults occur independently as opposed to when they occur with positive correlation.
To the best of our knowledge, this was the first paper to provide analytic results concerning network fault-tolerant communication in the presence of positively correlated faults for arbitrary networks.

In contrast, few results are known about fault-tolerant communication in geometric radio networks.
In \cite{KKP2001}, the authors consider the problem of broadcasting in a fault-free connected component of a radio network whose nodes are located at grid points of square grids and can communicate within a square of size $r$.
For an upper bound $t$ on the number of faulty nodes, in worst-case location, the authors propose a $\Theta(D + t)$-time oblivious broadcast algorithm and a $\Theta(D + \log(\min(r,t)))$-time adaptive broadcast algorithm, both operating on a connected fault-free component of diameter $D$.
More recently, the authors of \cite{CMPS2007} present a different problem, that of gossiping in directed GRNs with transient faults.
When nodes may send a single message per time slot, the authors present an algorithm performing in time $O(n\Delta)$ with $O(n^2)$ messages and show these bounds to be optimal.
When nodes may send multiple messages per time slot, they provide an algorithm functioning in optimal time complexity $O(D\Delta)$ and message complexity $O(Dn)$.
The same algorithm performs broadcasting within optimal time and optimal message complexity $O(n)$.
Also, in \cite{KPP2008}, an algorithm was demonstrated to broadcast correctly with probability $1-\epsilon$ in faulty random geometric radio networks of diameter $D$, in time $O(D + \log 1/\epsilon)$.
In \cite{CMPS2012}, the work from \cite{CMPS2007} is extended with the presentation of a distributed algorithm capable of broadcasting in all connected GRNs of unknown topology, with time complexity $O(DR^2/\gamma^2)$, where $R$ is the maxiumum range of transmitters and $\gamma$ is the minimum distance between receivers.
If we let $R=1$ (for simplicity) and $g = 1/\gamma$, as in our paper, this translates to a time complexity of  $O(Dg^2)$.
We provide an algorithm that broadcasts in time $O(Dg/l + g^4)$, where $l \in \Theta(\max\{(1-s),\gamma\})$.
In comparison, this is always at least as fast when $D \in \Omega(1/\gamma^2)$, i.e., on large and/or dense networks, and faster in cases where $\gamma \in o(1-s)$, that is when the degree of nodes is possibly very large.

The question of communication in networks of unknown topology has been widely studied in recent years.
In \cite{CMS2004}, the authors state that broadcasting algorithms which function in unknown GRNs also function in the resulting fault-free connected components of faulty GRNs.
A basic performance criterion of broadcasting algorithms is the time necessary for the algorithm to terminate;
in synchronous networks, this time is measured as the number of communication rounds.
For networks whose fault-free part has a diameter $D$, $\Omega(D)$ is a trivial lower bound on broadcast time, but optimal running time is a function of the information available to the algorithms (cf., e.g., \cite{DP2007}).
For instance, in \cite{DP2007}, an algorithm was obtained which accomplishes broadcast in arbitrary GRNs in time $O(D)$ under the assumption that nodes have a large amount of knowledge about the network,
i.e. given that all nodes have a {\em knowledge radius} larger than $R$, the largest communication radius.
The authors also show that algorithms broadcasting in time $O(D + \log n)$ are asymptotically optimal, for unknown GRNs when nodes communicate spontaneously and either can detect collisions or have knowledge of node locations at some positive distance $\epsilon$, arbitrarily small.

More recently, in \cite{EGKPPS2009}, it was shown that the time of broadcast depends on the network diameter $D$ and the smallest geometric distance $\gamma$ (denoted $d$ in their paper) between any two nodes.
Under the conditional wake-up model\index{network!conditional wake up model}, where nodes start transmitting only after hearing a first message, the authors proposed an algorithm that completes broadcasting in time $O(Dg)$.
They also proved that, in this context, every broadcasting algorithm requires $\Omega(D\sqrt{g})$ time.
Under the spontaneous wake up model\index{network!spontaneous wake up model}, where nodes may transmit from the beginning of the communication process, the authors combined two sub-optimal algorithms into one algorithm, which completes broadcasting in optimal time $O(\min(D + g^2, D \log g)$.
The results in \cite{EGKPPS2009} hold under the assumption that nodes can communicate with other neraby nodes.
We, on the other hand, consider the communication model where nodes are prevented from communicating with other nodes nearby.

In \cite{EKP2008}, under the conditional wakeup model, $\Omega(Dg)$ was shown to be the tight lower bound on broadcasting time.
However, for networks where nodes locations are restricted to the vertices of a grid of squares of size $\gamma$, the authors proposed an $O(Dg^{5/6}\log g)$-time broadcasting algorithm,
thus showing that the broadcast time is not always linearly dependent on $g$.

In \cite{FP2008}, the problem of broadcasting in unknown topology networks was proposed given that nodes do not perceive their location accurately and that they do not know the minimum distance $\gamma$ between them.
Under the spontaneous wake up model, the authors showed a broadcasting algorithm maintaining optimal time complexity $O(\min(D + g^2, D \log g)$ in these conditions given an upper bound $\gamma/2$ on the inaccuracy of node location perception;
beyond this upper bound on inaccuracy, the authors showed that broadcasting is impossible.
The solution proposed in \cite{FP2008} uses the election of {\em ambassadors} that represent a large number of nodes and communicate information to regions of the graph in range.
In contrast, we show the impossibility of using this mechanism in the presence of swamping.

In 2003, Kuhn, Wattenhofer and Zollinger \cite{KuWaZo03}, introduced a variant of the UDG model handling transmissions and interference separately, named {\em Quasi Unit Disk Graph (Q-UDG)} model.
In this model, two concentric discs are associated with each station, the smaller representing its communication range and the larger representing its interference range.
In our work, we consider a very different situation:
as in traditional radio communication models, interference and communication ranges are equal;
contrary to previous work, we add the swamping range - a self-interference range - which must be smaller than the communication range.


In the present paper, we assume that nodes communicate spontaneously, but know nothing of the network, other than their own location, and cannot detect collisions.
We propose algorithms to broadcast in networks embedded in the line and in the plane under the swamping communication model.
Contrary to the traditional radio communication model, it is not possible for nodes under the swamping communication model to directly receive messages from nodes located at close proximity to them.

\section{Lattice Line}
\label{s:LL}
Throughout this section, we assume that $r$ and $s$ are positive integers.
Consider a set of $n$ nodes placed at points $0, 1, \ldots, n-1$ in one-dimensional Euclidean space;
nodes are labeled according to their location.
We call this placement of the nodes the {\em lattice line}.
For simplicity, the communication and swamping ranges $r$ and $s$ are integer values;
each node may reach nodes which are located on points at distance at least $s+1$ from it, and at most $r$ from it.
In this section, we consider the broadcasting of a message $m$ from the node $0$ to all other nodes of the line.

In the sequel, we will present an algorithm $\cA$ and then prove the following result:

\begin{theorem}
\label{t:algoA}
Algorithm $\cA$ correctly broadcasts the message $m$ on the lattice line of length $n$, in time $\lfloor n/r \rfloor + 3(\lceil r / (r-(s+1)) \rceil + 1)$.
This order of magnitude is optimal.
\end{theorem}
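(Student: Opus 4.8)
The plan is to prove the three assertions of the theorem separately: that Algorithm $\cA$ informs every node (correctness), that it halts within the stated number of rounds (time upper bound), and that this order of magnitude cannot be beaten (optimality). For the first two I would decompose the broadcast into a \emph{long-range forward sweep} that carries the message from node $0$ toward node $n-1$ and a \emph{local fill-in} that covers the nodes skipped by the sweep. I would introduce the notion of the \emph{frontier}, the rightmost position that is guaranteed informed, and track its advance round by round.

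For the forward sweep I would argue by induction on the round number that the frontier advances by the full range $r$ at each relay step. The inductive step rests on two facts: consecutive designated relays are spaced at a distance lying in $[s+1,r]$, so each relay lies in the reach-annulus of its predecessor and correctly receives; and the transmitting set in each round is $s$-separated, so no intended receiver is swamped and no receiver hears two senders. This yields the $\lfloor n/r \rfloor$ term. For the fill-in I would show that between the interval covered by one relay and the interval covered by the next there remains a gap of width at most $r$, and that these gaps are closed by a right-to-left subschedule whose per-step \emph{net progress} toward gap-free coverage is $r-(s+1)$ (a transmitter at $j$ can reach leftward only into $[j-r,\,j-(s+1)]$, an interval of length $r-(s+1)$). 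Closing a window of width at most $r$ therefore costs $\lceil r/(r-(s+1))\rceil$ steps; since this fill-in either pipelines behind the sweep or runs once as a post-process, it contributes only the $n$-independent additive overhead $3(\lceil r/(r-(s+1))\rceil + 1)$. Summing the two phases gives the claimed running time.

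The most delicate part of the argument, and the one I expect to be the main obstacle, is the collision-and-swamping bookkeeping: for each round I must verify that every node scheduled to receive has exactly one transmitting neighbor and no transmitter within distance $s$. I would discharge this by fixing, for each round, an $s$-separated set of active transmitters and checking that each intended receiver lies in the reach-annulus $[s+1,r]$ of precisely one of them. The subtlety is that the fill-in sends messages in the reverse direction, so its transmissions must be interleaved with, or sequenced after, the forward relays in a way that never re-swamps a node that is being informed in the same round; getting the relative phases right is the crux of the correctness proof.

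For optimality I would establish a matching lower bound of $\Omega\!\left(n/r + r/(r-(s+1))\right)$. The first term is immediate: the maximum informed coordinate can increase by at most $r$ in any single round, so reaching node $n-1$ requires $\Omega(n/r)$ rounds. For the second term I would exhibit a short segment adjacent to the source in which the swamping constraint forces the \emph{fully informed prefix} to grow by at most $O(r-(s+1))$ per round, while the width of the segment that must be consolidated is $\Theta(r)$; this yields the $\Omega(r/(r-(s+1)))$ bound. Because the two bounds dominate in complementary parameter regimes (the first when $n$ is large relative to $r$, the second when $s$ is close to $r$), their sum matches the order of magnitude of the algorithm, completing the optimality claim.
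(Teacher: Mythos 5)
Your high-level decomposition (long-range sweep plus local fill-in, the same two additive terms, and a per-round counting lower bound) matches the paper's, and your optimality sketch is essentially the paper's argument: within any window of width $2r$, collision- and swamping-avoidance lets at most $\Theta(r-(s+1))$ nodes learn $m$ per round, giving $\Omega(n/r + r/(r-(s+1)))$. The genuine gap is in the fill-in mechanism, which you describe as a one-directional right-to-left cascade with net progress $r-(s+1)$ per step. Trace where the successive leftward transmitters come from: a transmitter at $j$ covers $[j-r,\,j-(s+1)]$, so the next leftward transmitter you would need, at roughly $j-(r-(s+1))$, lies inside that covered interval only if $j-r+s+1 \le j-s-1$, i.e.\ only if $r \ge 2s+2$. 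In the complementary regime $s+2 \le r < 2s+2$ --- exactly where the $\lceil r/(r-(s+1))\rceil$ term dominates and the theorem has content --- a purely leftward schedule stalls: every candidate transmitter deep enough to cover the residual hole is itself still uninformed. The paper's scheme $Local_k$ supplies the missing idea you deferred as ``getting the relative phases right'': each step takes \emph{two} rounds and ping-pongs between the interval $A_k=[k,\,r-1+k]$ and $B_k=[r+k,\,2r-1+k]$. In step $i$ the node $a_{i,k}=k+i(r-(s+1))$ transmits rightward, which (among others) informs $b_{i,k}=a_{i,k}+r$ at distance exactly $r$; then $b_{i,k}$ transmits leftward, extending the informed prefix of $A_k$ by exactly $r-(s+1)$. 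The leftward transmitters are thus recruited on the fly by the interleaved rightward transmissions; your proposal never arranges this, and without it the claimed per-step progress is unavailable.

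Two smaller but real defects in the collision bookkeeping. First, your stated sufficient condition --- ``the transmitting set in each round is $s$-separated, so no intended receiver is swamped and no receiver hears two senders'' --- is false: two transmitters at any distance in $(s,\,2r]$ can collide at a common neighbor, and a transmitter within distance $s$ of someone else's intended receiver swamps it regardless of inter-transmitter separation. The paper instead runs only one node per round in the sweep, and in the fill-in schedules the parallel copies of $Local_k$ at offsets $4r$ apart (first $k\equiv 0$, then $k \equiv 2r \pmod{4r}$, then a final run at $k=n-2r$), which is also the derivation of the factor $3$ in $3(\lceil r/(r-(s+1))\rceil+1)$ that you assert without justification. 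Second, if the fill-in ``pipelines behind the sweep'' as you suggest as an alternative, the reverse-direction transmissions would have to be checked against the forward relays round by round, a verification your plan flags but does not carry out; the paper avoids it entirely by strictly sequencing the two parts.
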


\subsection{Non-Connectivity}
Consider the case when $s > 0$ and $r - s = 1$.
In this case, completing the broadcasting process in the lattice line is impossible.

\begin{lemma}
\label{l:impossible}
If $s > 0$ and $r - s = 1$, then broadcast is impossible on the lattice line.
\end{lemma}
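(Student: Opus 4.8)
The plan is to show that under the hypotheses $s > 0$ and $r - s = 1$ the communication graph $G$ on the lattice line is disconnected; impossibility of broadcast is then immediate, since no algorithm, however clever, can deliver $m$ along a path that does not exist.

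First I would pin down the neighbor relation exactly. Because $r$ and $s$ are integers and all pairwise distances on the lattice line are integers, the defining condition for a link $(u,v) \in E$, namely $dist(u,v) > s$ and $dist(u,v) \le r$, amounts to $dist(u,v) \in \{s+1, \ldots, r\}$. The hypothesis $r - s = 1$ collapses this interval to the single value $s+1$, so $(i,j) \in E$ if and only if $|i - j| = s+1$. In other words, each node $i$ is adjacent only to $i - (s+1)$ and $i + (s+1)$, whenever these lie in $\{0, \ldots, n-1\}$.

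Next I would read off the connected-component structure. Since every edge shifts a node's index by exactly $s+1$, any path in $G$ preserves the residue of the index modulo $s+1$; hence two nodes can lie in the same component only if their indices are congruent modulo $s+1$. Because $s > 0$ forces $s+1 \ge 2$, there is more than one residue class, and in particular the source $0$ (residue $0$) and the node $1$ (residue $1$, which is distinct since $1 < s+1$) fall into different components. Thus, for any line with $n \ge 2$, no path joins $0$ to $1$, and $G$ is disconnected.

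Finally, impossibility follows by a routine induction on the number of rounds: a node receives $m$ only from a neighbor that already holds $m$, so every node that ever holds $m$ is connected to the source by a path in $G$. Since node $1$ lies in a different component from the source, it never receives $m$ and broadcast cannot complete. I expect the only delicate point to be the integrality step that reduces the admissible distance to the single value $s+1$; once that is in place, the conclusion is a purely structural observation about $G$ that holds independently of the algorithm, which is precisely why the impossibility is unconditional.
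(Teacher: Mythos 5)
Your proposal is correct and follows essentially the same route as the paper: both reduce the hypothesis $r-s=1$ to the observation that edges join only nodes at distance exactly $r=s+1$, conclude that paths from the source reach only labels that are multiples of $r>1$, and derive impossibility from disconnection. Your version merely makes the paper's implicit steps explicit (the modular invariant along paths and the induction showing an informed node must be connected to the source), which is a welcome tightening but not a different argument.
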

\begin{proof}
The source node has label $0$ and possesses the source message $m$.
Since $r-s = 1$, each node $u$ has a link to a node $v$ only if it is exactly at distance $r$ from it.
We may thus model any path of this network by a sequence of additions and subtractions of $r$ on the source node label.
Hence, the node $0$ has paths only to nodes whose labels are multiples of $r$.
For $r$ ans $s>0$ integers, we have $r>1$.
In this case, the network is disconnected; broadcasting is impossible.
\end{proof}

\subsection{Fast Broadcast}
In the previous section, we have shown conditions under which broadcast is impossible.
We now show that, when these conditions are not met, broadcast is possible.
We further show an algorithm for broadcasting in optimal time on the line.

Consider two sets of nodes on the line, $A_k$ and $B_k$, where $A_k$ (respectively, $B_k$) is the set of all nodes whose labels are in the interval $[k, r - 1 + k]$ (resp., $[r + k, 2r - 1 + k]$).
We now describe the communication scheme $Local_k$ for disseminating a message inside these sets.
The scheme $Local_k$ consists of $x = \lceil r / (r-(s+1)) \rceil$ steps $i = 0, 1, \ldots, x-1$, each taking two rounds.
For each step $i$, in the first round the node with label $a_{i,k} = i \cdot (r - (s+1)) + k$ from the set $A_k$ transmits the message $m$;
in the second round the node in set $B_k$ with label $b_{i,k} = a_{i,k} + r$ transmits the message $m$.
In the following lemma, we assume the absence of collision with nodes external to the communication scheme.

\begin{lemma}
\label{cl:possible}
Given that the node $k$ has previously received the message $m$, all nodes in $A_k$ will have received the message $m$ at the end of scheme $Local_k$.
\end{lemma}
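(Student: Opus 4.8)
My plan is to follow the message through the scheme round by round and to prove, by induction on the step index $i$, the combined invariant that at the beginning of step $i$ the transmitter $a_{i,k}$ already holds $m$ and every node with label in $[k, a_{i,k}]$ has received $m$. Since $a_{0,k} = k$, the hypothesis that $k$ has received $m$ gives the base case $i = 0$ immediately. Before starting I would record that we are in the regime $r \ge s+2$: the excluded value $r-s = 1$ is exactly the non-connected case of Lemma~\ref{l:impossible}, and $r \le s$ leaves no links at all, so the step size $r-(s+1)$ is a positive integer and $x = \lceil r/(r-(s+1)) \rceil$ is well defined.

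For the inductive step I would examine the two rounds of step $i$ in turn. In the first round $a_{i,k}$ is the unique sender of the scheme, and $b_{i,k} = a_{i,k}+r$ sits at distance exactly $r$ from it; since $r > s$, the node $b_{i,k}$ is in range and is not swamped, so by the no-external-collision assumption it receives $m$. In the second round $b_{i,k}$ transmits back towards $A_k$, and the nodes it reaches on its left are precisely those with labels in $[b_{i,k}-r,\, b_{i,k}-(s+1)] = [a_{i,k},\, a_{i+1,k}]$; none of these lie within the swamping radius $s$ of $b_{i,k}$, and $b_{i,k}$ is again the only sender, so every one of them receives $m$. In particular $a_{i+1,k}$ does, and $[k, a_{i,k}] \cup [a_{i,k}, a_{i+1,k}] = [k, a_{i+1,k}]$, which advances the invariant to step $i+1$.

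The arithmetic identity that makes this work is $b_{i,k} - a_{i+1,k} = s+1$: the next transmitter lands exactly at the smallest admissible communication distance from $b_{i,k}$, so the message just clears the swamping gap. I expect this boundary equality, together with the companion identity $b_{i,k} - a_{i,k} = r$ that keeps the whole reached block $[a_{i,k}, a_{i+1,k}]$ at distances in $[s+1, r]$ from $b_{i,k}$, to be the point where the proof must be handled most carefully. Consecutive reached blocks overlap at the shared endpoint $a_{i+1,k}$, so their union over $i = 0, \dots, x-1$ telescopes to $[k, a_{x,k}]$ with no gaps; the same ceiling estimate, applied to the indices $i \le x-1$, also confirms that each prescribed transmitter satisfies $a_{i,k} \in A_k$ and $b_{i,k} \in B_k$ and therefore exists. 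Finally, from $x \ge r/(r-(s+1))$ I obtain $a_{x,k} = k + x\,(r-(s+1)) \ge k+r > k+r-1$, whence $A_k = [k, k+r-1] \subseteq [k, a_{x,k}]$ and every node of $A_k$ has received $m$. No internal collision ever occurs, since each round of the scheme has a single sender, so the only collisions to rule out are the external ones already excluded by hypothesis.
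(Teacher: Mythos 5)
Your proof is correct and takes essentially the same route as the paper's: an induction on the step index in which round one of step $i$ delivers $m$ across distance exactly $r$ from $a_{i,k}$ to $b_{i,k}$, round two extends the informed prefix to $[k,\,a_{i+1,k}]$ via the interval $[b_{i,k}-r,\,b_{i,k}-(s+1)]$, and the conclusion follows from the ceiling bound $x\,(r-(s+1)) \geq r$. Your extra bookkeeping --- the invariant that $a_{i,k}$ already holds $m$, the check that all prescribed transmitters lie in $A_k$ and $B_k$, and the standing assumption $r \geq s+2$ --- merely makes explicit what the paper's proof leaves implicit, and is a welcome tightening rather than a different argument.
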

\begin{proof}
Without loss of generality, let $k = 0$.
We show by induction that the scheme $Local_0$ sends the message $m$ to all nodes in the set $A_0$.

\noindent {\bf Base step:}\newline
In step $0$, first node $a_{0,0} = 0$ transmits, the message $m$;
the nodes with labels $s+1, s+2, \ldots, r$ receive the message.
In the second round of step $0$, the node $b_{0,0} = r$ transmits;
the nodes with labels $0, 1, \ldots, r - (s+1)$ receive the message.

\noindent {\bf Inductive hypothesis:}\newline
Assume that the nodes with labels in $[0, i \cdot (r - (s+1))]$ have received the message by the end of of step $i-1$.
Then, by the end of step $i$, the nodes with labels in $[0, (i+1) \cdot (r - (s+1))]$ will have received the message.

\noindent {\bf Proof of the inductive hypothesis:}\newline
In the first round of step $i$, the node $a_{i,0} = i(r-(s+1))$ sends the message and the nodes with labels in
$$
[i \cdot (r - (s+1)) + (s+1), i \cdot (r - (s+1)) + r]
$$
receive $m$ from the node $a_{i,0} = i \cdot (r - (s+1))$.
In the second round of step $i$, the node $b_{i,0} = i(r-(s+1)) + r$ sends the mesage and the nodes with labels in
$$
[i \cdot (r - (s+1)), i \cdot (r - (s+1)) + r - (s+1)]
$$
receive $m$ from the node $b_{i,0} = r + i \cdot (r - (s+1))$.
The latter interval overlaps the set of nodes which had previously received the message $m$.
The largest label of the nodes in the receiving interval may be rewritten as $(i+1) \cdot (r - (s+1))$.
This proves the inductive hypothesis.

Hence, after $\lceil r / (r-(s+1)) \rceil$ steps of scheme $Local_k^*$, the nodes with labels in the interval $[0,(\lceil r / (r-(s+1)) \rceil) \cdot (r - (s+1))]$ will have received the message.
Since $\lceil r / (r-(s+1)) \rceil (r - (s+1)) \geq r$, this proves the lemma.
\end{proof}

Consider now the scheme $Local_k^*$ consisting of the scheme $Local_k$ to which one step is added: the step $\lceil r / (r-(s+1)) \rceil + 1$.

\begin{lemma}
\label{r:goodAB}
Given that the node $k$ has previously received the message $m$, all nodes in $A_k$ and $B_k$ will have received the message $m$ at the end of scheme $Local_k^*$.
\end{lemma}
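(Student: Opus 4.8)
The plan is to reduce the statement to the coverage of $B_k$, since the coverage of $A_k$ comes for free. Because $Local_k^*$ is obtained from $Local_k$ by appending one extra step, every round of $Local_k$ still occurs, so Lemma~\ref{cl:possible} immediately guarantees that all nodes of $A_k$ receive $m$; the added step can only deliver more copies of $m$. By the translation invariance of the construction I would again assume $k=0$, so that $A_0 = [0,r-1]$, $B_0 = [r,2r-1]$, the senders sit at $a_{i,0} = i(r-(s+1))$ and $b_{i,0} = i(r-(s+1)) + r$, and it remains only to show that every node of $B_0$ hears $m$ by the end of $Local_0^*$.

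First I would reuse, verbatim from the proof of Lemma~\ref{cl:possible}, the fact that each sender holds $m$ when it transmits: the induction there shows $a_{i,0}$ has received $m$ by the end of step $i-1$, and $b_{i,0}$ then receives $m$ from $a_{i,0}$ in the first round of step $i$ (their distance is $r$, a legal reception distance since $r > s$). The key move is then to track the \emph{upward}-directed transmissions, which are exactly the ones reaching into $B_0$. Writing $c = r-(s+1)$ and $x = \lceil r/(r-(s+1)) \rceil$, the first-round transmission of $a_{i,0}$ delivers $m$ to $[ic+(s+1),\, ic+r]$; since the top $ic+r$ of one such interval equals the bottom $(i+1)c+(s+1)$ of the next, these intervals chain together and their union over $i=0,\ldots,x-1$ is $[s+1,\, (x-1)c+r]$, which covers the lower portion $[r,\, (x-1)c+r]$ of $B_0$. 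The remaining upper portion is covered by the upward transmission of $b_{0,0}=r$, reaching $[r+(s+1),\, 2r]$, reinforced by the appended step of $Local_0^*$, whose role is to pin down the topmost nodes of $B_0$ nearest $2r-1$.

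The hard part will be to check that these coverage intervals leave no gap, i.e. that $[r,\, (x-1)c+r] \cup [r+(s+1),\, 2r-1]$ exhausts $[r,2r-1]$; this is where the whole argument lives, and it reduces to the single inequality $(x-1)c \geq s+1$. I would obtain it straight from the ceiling: $xc = \lceil r/c\rceil\, c \geq r$, hence $(x-1)c \geq r-c = r-(r-(s+1)) = s+1$, so the top of the $a$-covered interval reaches at least $r+(s+1)$ and overlaps the $b_{0,0}$-covered interval. The same estimate $(x-1)c < r$ shows $(x-1)c+r \leq 2r-1$, so both intervals genuinely sit inside $B_0$. Assembling the pieces, every node of $B_0$ receives $m$ by the end of $Local_0^*$; combined with the $A_0$ coverage inherited from Lemma~\ref{cl:possible} and undoing the translation by $k$, this establishes the claim for all $A_k$ and $B_k$.
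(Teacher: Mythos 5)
Your proof is correct, but it takes a genuinely different route from the paper's. The paper proves coverage of $B_k$ by a translation correspondence: it shows that the subinterval of $B_k$ informed by $a_{i+1,k}$'s transmission at step $i+1$ is exactly the $r$-shift of the subinterval of $A_k$ informed by $b_{i,k}$ at step $i$, so the informed portion of $B_k$ trails that of $A_k$ by one step, and the single step appended in $Local_k^*$ then converts Lemma~\ref{cl:possible}'s coverage of $A_k$ into coverage of $B_k$ with no new interval computation. You instead compute the coverage of $B_0$ directly: the upward intervals $[ic+(s+1),\,ic+r]$ chain without gaps, the top portion $[r+(s+1),\,2r]$ is supplied by $b_{0,0}$'s step-$0$ transmission, and the closure rests on the single ceiling estimate $(x-1)c \geq r-c = s+1$, which you derive correctly (as is the fact, imported from the induction of Lemma~\ref{cl:possible}, that each scheduled sender already holds $m$ when it transmits, and that within the scheme exactly one node transmits per round so no internal collision or swamping interferes). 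What each approach buys: the paper's argument is shorter, reuses the induction wholesale, and genuinely needs the appended step, though it only checks the shift correspondence for the $b$-downward/$a$-upward interval pairs; your explicit accounting is heavier but self-contained, and it in fact proves a slightly stronger statement --- since $b_{0,0}$ covers $[r+(s+1),\,2r]$ already at step $0$ and $(x-1)c+r \geq r+s+1$, all of $B_0$ is informed by the end of $Local_0$ itself, so the extra step of $Local_0^*$ is never used in your argument. Your remark that the appended step ``pins down the topmost nodes'' is therefore a harmless red herring: the lemma, being stated for the longer scheme $Local_k^*$, follows a fortiori from coverage at the end of $Local_k$, since later transmissions cannot undo receptions.
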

\begin{proof}
To obtain the set of labels of the nodes in $B_k$ we need only add $r$ to each label contained in the set $A_k$.
Hence, to prove the lemma, we need only show that if the nodes with labels in $[a,b] \cap A_k$ are informed at step $i$, then, at step $i+1$, the nodes with labels in $[a+r,b+r] \cap B_k$ are informed.
For simplicity and without loss of generality, we set $k = 0$ in the following proof.

Observe that in step $i$, the node with label $i \cdot (r - (s+1)) + r$ in interval $B_k$ transmits the message to the nodes in subinterval
$$
[i \cdot (r - (s+1)), i \cdot (r - (s+1)) + r - (s + 1)]
$$
of interval $A_k$.
Also observe than in step $i+1$, the node with label $(i+1) \cdot (r - (s+1))$ in interval $A$ transmits the message to the nodes in the subinterval
$$
[(i+1) \cdot (r - (s+1)) + (s+1), (i+1) \cdot (r - (s+1)) + r]
$$
of interval $B_k$.
To see that these intervals differ exactly by $r$, we subtract $r$ from the latter and obtain
\begin{eqnarray*}
&[(i+1) \cdot (r - (s+1)) + (s+1) - r, (i+1) \cdot (r - (s+1)) + r - r]&\\
&[i \cdot (r - (s+1)), (i+1) \cdot (r - (s+1))]&
\end{eqnarray*}
which corresponds to the former.
The lemma follows.
\end{proof}

Consider algorithm $\cA$ for broadcasting on a line which consists of $2$ parts.
In the first part, the message is sparsely transmitted throughout the line by transmitting the message $m$ sequentially by nodes $0, r, \ldots, \lfloor n/r \rfloor r$.
In the second part, the scheme $Local_k$ is executed, in parallel, for $k = 0, 4r, \ldots, 4 (\lfloor n/(4r) \rfloor -1) r$ and then for $k = 2r, 6r, \ldots, (4 (\lfloor n/(4r) \rfloor -1) - 2) r$.
The algorithm is terminated by executing scheme $Local_k$ for $k = n - 2r$.

We extend the validity of Algorithm $\cA$ to any source node by re-labeling nodes sequentially left to right such that the source has label $0$.
The sparse transmission part of the algorithm is then executed from node $0$ through the positive node labels and then through the negative node labels.
The remainder of the algorithm is identical.

We now prove the main theorem of this section.

\begin{Proof}{ of Theorem~\ref{t:algoA}}
We begin by showing the algorithm correctness and then its execution time.

From Lemma~\ref{r:goodAB}, the scheme $Local_k$ will successfully broadcast the message for each interval of size $2r$, given that the node with label $k$ has received the message, and assuming no collision caused by transmissions by nodes external to the communication intervals.
Since all transmissions that are executed in parallel originate at nodes whose labels differ by $4r$ in our algorithm, no collision occurs.
Hence, if the sparse transmission part of the algorithm is successful, then the second part of the algorithm will also be successful.
Moreover, if the second part of the algorithm is successful, the node whose label is $n - 2r$ will have received the message before the termination phase of the second part.
Hence, the termination phase of the algorithm will also be successful.
Thus the algorithm correctly broadcasts the message $m$.

We now count the number of rounds necessary to execute the algorithm.
The number of rounds to complete the first part of $\cA$ is $\lfloor n/r \rfloor$.
Each phase of the second part and the termination phase of the algorithm are completed in time $\lceil r / (r-(s+1)) \rceil + 1$.
Hence, the algorithm is completed in $\lfloor n/r \rfloor + 3(\lceil r / (r-(s+1)) \rceil + 1)$ rounds.

We will now prove optimality.
In order to disseminate the message from one end of the line to the other, $n/r$ constitutes a trivial lower bound.
For one node to receive a message, only one node within distance $r$ must send a message.
Therefore, avoiding all collisions, at most $2(r - (s+1))$ nodes learn the message within each interval of size $2r$, at each round.
It follows that, for all nodes of any interval of such size to know the message $m$, it takes at least $r/(r - (s+1))$ steps.
Hence, the lower bound on transmission time is $\Omega(n/r + r/(r - (s+1)))$.
\end{Proof}

\section{Highway Model}
\label{s:Highway}
In this section, we analyze the problem of broadcasting along a line segment of length $L$ where nodes are placed by an adversary.
Each node $u$ is equipped to communicate with all nodes that are both within distance $1$ and at distance greater than $s$ from it.
Hence, in this section, we assume that $r=1$ for simplicity.
More formally, we describe the {\em highway model}.
The communication range of a node $u$ is the interval within distances $(s, 1]$ from $u$.
The {\em size} of the communication range is the length of this interval,
i.e., $1-s$.
The adversary designs the network such that it is {\em connected} and the distance between any pair of nodes $u,v$ is at least $\gamma$.
We remind that the parameter $\gamma$ may be related to the physical size of the nodes such that no two could occupy the same space.
We say that a network is connected if, for any node pair $u,v$, there exists a path in the network from node $u$ to node $v$.
Observe that, if $\gamma> 1$, then the network is not connected.

Message collisions result in noise indistinguishable from background noise and nodes are not equipped to detect these collisions.
However, in this section we use the apparent silence from collisions to discover the presence of nodes through a {\em collision-causing} algorithm.

Nodes are aware of the parameter $\gamma$ (and $g=1/\gamma$) and the coordinate system of the line segment of length $L$.
Each node also knows the parameter $s$, its swamping distance, and its communication distance $1$.

In this section, we present a broadcasting algorithm $\cB$ and show the following result.

\begin{theorem}
\label{t:algoB}
Algorithm $\cB$ broadcasts a message $m$ in a network of diameter $D$ in time $O(D/l + g^2)$, where $l = \max\{(1-s),\gamma\}$.
\end{theorem}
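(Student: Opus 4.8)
The plan is to separate the two costs appearing in the bound: a global \emph{propagation} cost of $O(D/l)$ to push the informed frontier from the source across the full segment, and an additive \emph{local} cost of $O(g^2)$ for overcoming swamping inside an adversarially dense neighbourhood. Throughout I would maintain the invariant of a \emph{solid frontier} $f$, meaning every node with coordinate at most $f$ has received $m$, and argue that each macro-step (a constant number of rounds) advances $f$ by $\Omega(l)$. To design $\cB$ I would tessellate $[0,L]$ into cells of length $\Theta(l)$, indexed left to right, and have the informed cells near the frontier drive the advance. Two geometric facts do the bookkeeping: first, connectivity together with the minimum separation $\gamma$ guarantees that, as long as some node is still uninformed, the informed set has an edge to an uninformed node, so progress is never blocked; second, since every edge has length at most $1$, the two extreme nodes of the segment are at least $\lfloor L \rfloor$ hops apart, so $L = O(D)$ and the number of macro-steps, $O(L/l)$, is indeed $O(D/l)$.

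The heart of $\cB$ is a continuous, position-oblivious analogue of the scheme $Local_k$. The obstruction created by swamping is the dead zone $(f, f+s]$ immediately ahead of the frontier: a node at the frontier can reach forward only into $(f+s, f+1]$, so the nodes in $(f, f+s]$ cannot be informed from the left and must be reached by a relay lying to their right. I would therefore alternate a forward transmission, which informs some node $q \in (f+s, f+1]$, with a backward transmission from $q$, which informs nodes in $[q-1, q-s)$ and fills part of the dead zone, exactly mirroring the inductive step of Lemma~\ref{cl:possible} and the forward/backward pairing of Lemma~\ref{r:goodAB}. In the regime $l = 1-s$ each such forward/backward cycle advances the solid frontier by $\Theta(1-s)$, which is the source of the $D/l$ term.

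Because positions are unknown and collisions cannot be detected, the forward and backward transmitters must be located blindly. Here I would use the collision-causing primitive announced in the section: by scheduling transmissions over the $\Theta(g)$ candidate slots (at resolution $\gamma$) inside a cell and reading \emph{apparent silence} as evidence of occupancy, an informed node can discover which slots near it actually hold nodes and can then single out one transmitter per target so that exactly one neighbour fires and no node within distance $s$ is swamped. Collision-freedom at the global scale is handled by colouring the cells with a constant number of colours so that simultaneously active cells are more than $1$ apart (no collisions) and more than $s$ apart (no cross-swamping); this is the continuous counterpart of the ``labels differ by $4r$'' argument in the proof of Theorem~\ref{t:algoA}. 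Correctness then follows by induction on the frontier: the base case is the source's own neighbourhood, and each inductive step uses the local relay scheme to push $f$ forward over one cell, with connectivity ensuring the frontier can always be advanced until all nodes are informed.

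For the running time, the propagation accounts for $O(L/l) = O(D/l)$ macro-steps of constant length. The remaining $O(g^2)$ is the worst-case price of the local discovery-and-fill: the adversary may pack $\Theta(g)$ nodes into a single interval of width at most $s$, where no two of them can hear one another, so each must be informed individually by an external relay, and locating those relays blindly may cost up to $\Theta(g)$ probes per relay, a one-time additive $O(g^2)$. I expect this local fill to be the main obstacle. Guaranteeing that a dense, adversarially placed cluster is completely informed, while simultaneously avoiding collisions at uninformed nodes with no collision detection and respecting the swamping constraint, is exactly the point where the clean lattice argument of Lemmas~\ref{cl:possible}--\ref{r:goodAB} must be replaced by a discovery procedure, and pinning the bound on that procedure down to $O(g^2)$ rather than something larger is the delicate step.
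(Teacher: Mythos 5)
Your plan has a genuine gap at its core: the \emph{solid frontier} invariant, advancing by $\Omega(l)$ per constant-round macro-step, cannot be maintained under swamping, and the substitution $L = O(D)$ hides the fact that the true bound must be hop-based, not geometric. A node $w$ in the dead zone $(f, f+s]$ can only be informed by a transmitter at distance in $(s,1]$ from it; your forward/backward cycle fills at most a slice of length $1-s$ of that dead zone (the backward window $[q-1,q-s)$ intersected with $(f,f+s]$), so for $s > 1/2$ a single relay $q$ never suffices, and the adversary need not place any further relays in $(f+s, f+1]$ at all. Connectivity only guarantees that $w$ has \emph{some} neighbor, and the informing path to that neighbor may wander far to the right and back: as the paper itself observes, the path between two nodes within distance $s$ is bounded only by the diameter $D$, which can be much larger than $L$ (e.g.\ $\Theta(g)$ hops inside a cluster of geometric width $s$). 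So your claim of $O(L/l)$ constant-length macro-steps is unsound — it would in fact prove a bound \emph{stronger} than the theorem, which should have been a warning sign. The paper's Lemma~\ref{l:ProcT} avoids this entirely: it makes no geometric progress claim, but instead compares Procedure $\cT$ on $G$ to idealized flooding $\cF$ on a collision-free copy $G'$ (which takes $\Theta(D)$ rounds), showing each flooding round is simulated with slowdown $O(1/l)$ because each region sweeps its $\lceil 3/l \rceil$ blocks with a \emph{constant} number of spokesmen per block — constancy guaranteed by Lemma~\ref{c:rangeOverlap} (the neighborhoods of all nodes between two nodes at distance $\leq 1-s$ are covered by the two extremes) together with the spokesman-uniqueness argument, a mechanism absent from your proposal.

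The second gap is your accounting of the $O(g^2)$ term as a ``one-time'' cost of frontier-local discovery-and-fill. The adversary can plant a dense cluster in every region along the line; if discovery is driven adaptively by informed nodes as the frontier arrives, its cost is paid once \emph{per cluster}, summing to $\Omega(L \cdot g)$ or worse rather than an additive $O(g^2)$. The paper sidesteps this by exploiting spontaneous wake-up: Procedure $\cD^*$ (one pass of $\cD$ followed by $\cD_{(b,h)}$ for all $\Theta(g)$ anchor slots, each pass taking $\Theta(g)$ rounds) runs \emph{once, before any propagation, in parallel over all regions}, with Lemma~\ref{l:noCollide} guaranteeing that identically labeled blocks in distinct regions (at distance $\geq 2$) never interfere, so the total is $\Theta(g^2)$ globally (Lemma~\ref{l:procD*}). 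After $\cD^*$ every node knows \emph{all} nodes within distance $1$ — including its swamped neighbors within distance $s$, learned via the silence-as-collision inference you correctly identify — and it is precisely this complete local knowledge that makes the deterministic spokesman election, and hence the collision-free $O(D/l)$ transmission phase, possible. Your proposal has the right primitives (the collision-causing discovery, the region-reuse colouring) but assembles them in the wrong order: discovery must be global and upfront, and propagation must be charged against graph hops, not against geometric length.
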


In order to prove the main result of this section, we need several preparatory lemmas.
The following fact requires no proof.

\begin{fact}
\label{f:connect}
For any node $u$ in a connected network, there is at least one node $v$ within the set $\Gamma(u)$.
\end{fact}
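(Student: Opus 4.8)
The plan is to derive this fact directly from the definition of connectivity given in Section~\ref{s:ModelProblem}, together with the observation that the broadcasting problem is only meaningful when the network contains at least two nodes. Recall that a network is declared \emph{connected} precisely when, for every pair of nodes $u,v$, there exists a path in $G$ from $u$ to $v$. I would fix an arbitrary node $u$ and use this definition to exhibit a neighbor.

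First I would note that, since we are broadcasting a message $m$ from a source to at least one other node, we may assume $n = |V| \geq 2$; hence there exists some node $v \neq u$. Applying the connectivity hypothesis to the pair $u,v$ yields a path $u = w_0, w_1, \ldots, w_k = v$ in $G$, where $k \geq 1$ because the endpoints are distinct. The existence of the first edge $(w_0, w_1) = (u, w_1) \in E$ is exactly the statement that $w_1$ lies at distance greater than $s$ and at most $r$ from $u$, i.e., that $w_1 \in \Gamma(u)$. Thus $\Gamma(u)$ is nonempty, which is the claim.

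The only point needing any care is the degenerate single-node network, where $u$ vacuously has no neighbors even though the graph is trivially connected; this case is excluded by the standing assumption that a broadcast involves a source together with at least one recipient, so $n \geq 2$. The hard part, insofar as there is one, is merely being explicit that an edge appearing on a path between two distinct endpoints is a genuine link of $G$ and therefore witnesses membership in $\Gamma(u)$; everything else is immediate from the definitions, which is why the authors state the fact without proof.
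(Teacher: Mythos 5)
Your proposal is correct and coincides with the paper's treatment: the paper explicitly states that this fact ``requires no proof,'' and the omitted justification is precisely your observation that connectivity of $G$ with $n \geq 2$ yields a path from $u$ to some $v \neq u$ whose first edge witnesses a node in $\Gamma(u)$. Your handling of the degenerate single-node case is a reasonable (and harmless) extra precaution consistent with the paper's standing assumption that broadcasting involves at least two nodes.
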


\subsection{Partition $\cP$ of the Line}
We now define a partition, called $\cP$, on which our communication algorithm will operate.
For each line segment in the partition below, the segment includes its leftmost point and excludes its rightmost point so that there will be no intersection between adjacent segments.
We provide a graphical representation of the partition in Figure \ref{fig:partition} and describe it below in detail.
\begin{figure}[!htb]
\centering
\includegraphics[scale=0.75, bb =0 0 386 256,angle=0]{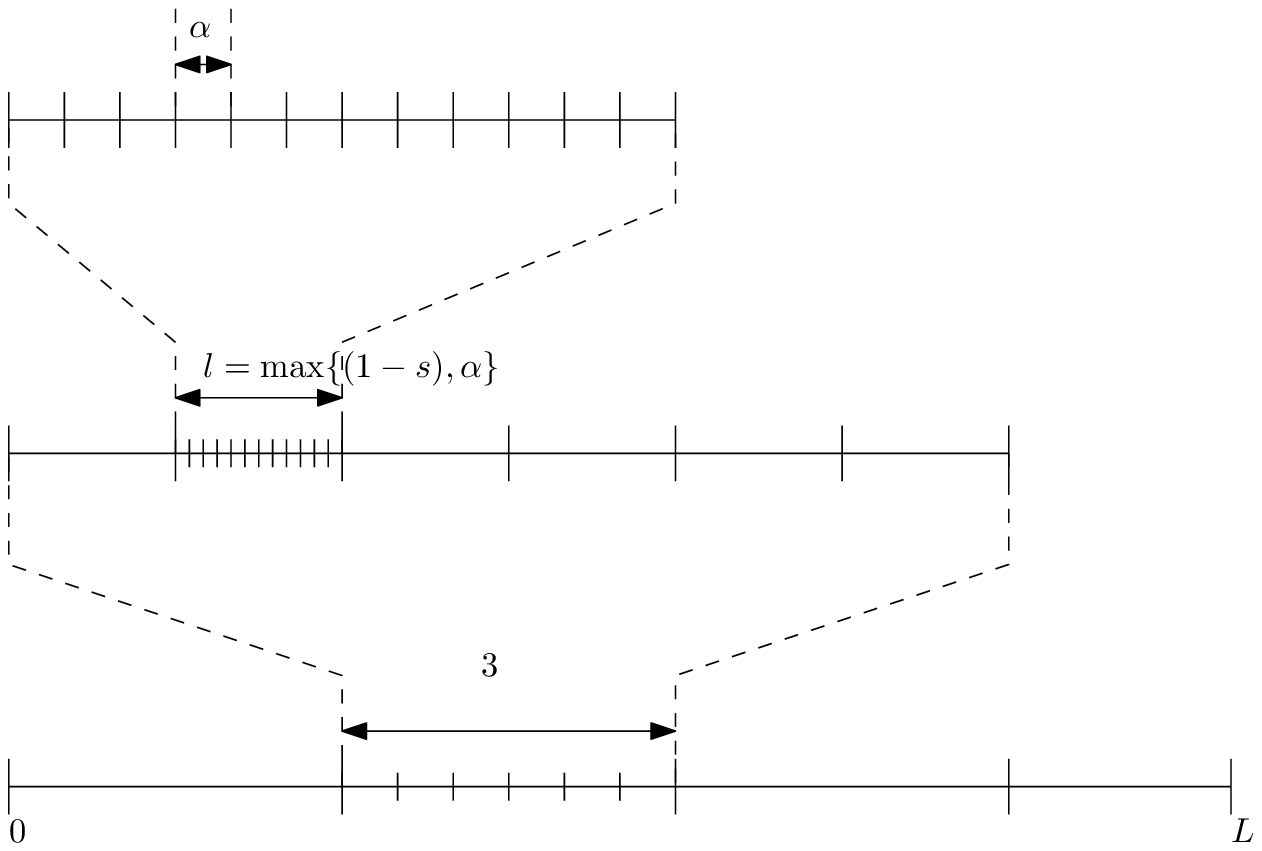}
\caption{Partition $\cP$}
\label{fig:partition}
\end{figure}

Partition the line into line segments of length $3$, called {\em regions}.
The line contains $\lceil L/3 \rceil$ regions, where $\lfloor L/3 \rfloor$ are of length $3$ and at most one (the rightmost) is shorter, and even may consist of a single point.

Further, partition each region into smaller line segments, called {\em blocks}, of length $l = \max\{(1-s),\gamma\}$.
Here, $l \leq 1$ since both $\gamma \leq 1$ and $1-s \leq 1$.
Each region contains $\mu = \lceil 3/l \rceil$ blocks, where $\lfloor 3/(1-s) \rfloor$ are of length $l$ and at most one (the rightmost) is shorter, and even may consist of a single point.
For each region, label blocks $1, 2, \ldots, \mu$, from left to right.

Partition also each block into line segments of length $\gamma$, called {\em homes}.
Each block contains $\nu = \lceil l/\gamma \rceil$ homes, where $\lfloor l/\gamma \rfloor$ are of length $\gamma$ and at most one (the rightmost) is shorter, and even may consist of a single point.
For each block, label homes $1, 2, \ldots, \nu$, from left to right.

\subsubsection{Partition Properties}
We now show communication properties related to the partition defined above.
We first show that transmissions in  distinct regions do not collide, if properly scheduled.
We then show that transmissions by a few distinguished nodes in a part or all of a block can reach all neighbors of nodes on this block or part of a block.
However, before showing these properties, we observe that since homes are of length at most $\gamma$, at most one node can occupy each home.
Hence, we have the following lemma.
\begin{lemma}
\label{l:singleHome}
Each home contains at most one node.
\end{lemma}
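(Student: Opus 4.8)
The plan is to argue by contradiction, leaning entirely on the minimum-distance hypothesis together with the half-open convention used to define the partition $\cP$. First I would isolate the two facts that do all the work: by construction every home is a line segment of length at most $\gamma$, and (again from the definition of $\cP$) each segment includes its leftmost point but excludes its rightmost point. Thus a home has the form $[a, a+\gamma')$ with $\gamma' \le \gamma$.

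Next I would suppose, for contradiction, that some home contains two distinct nodes $u$ and $v$, lying at coordinates $x_u, x_v \in [a, a+\gamma')$. Since both coordinates satisfy $a \le x < a + \gamma' \le a + \gamma$, their geometric distance obeys $|x_u - x_v| < \gamma' \le \gamma$. This contradicts the minimum-distance requirement that any pair of nodes be at distance at least $\gamma$ from one another. Hence no home can hold two nodes, and the lemma follows.

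The only point that requires genuine care — and, I suspect, the reason the half-open convention was introduced when $\cP$ was defined — is the strictness of the inequality $|x_u - x_v| < \gamma$. Were homes taken to be closed intervals $[a, a+\gamma]$, two nodes placed at the two endpoints would be exactly $\gamma$ apart, which the hypothesis permits, and the argument would break down. So in writing the proof I would make explicit that it is precisely the exclusion of the rightmost point that forces the separation strictly below $\gamma$, rendering a second node in the same home incompatible with the minimum-distance bound. Everything else is immediate.
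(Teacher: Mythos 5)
Your proof is correct and takes essentially the same route as the paper, which simply observes that since homes have length at most $\gamma$ and nodes are pairwise at distance at least $\gamma$, no home can hold two nodes. You add one worthwhile detail the paper leaves implicit: the half-open convention $[a, a+\gamma')$ is exactly what makes the separation strictly less than $\gamma$ and rules out two nodes sitting at opposite endpoints of a closed length-$\gamma$ segment.
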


\begin{lemma}
\label{l:noCollide}
Transmissions from unique nodes inside identically labeled blocks in distinct regions do not collide.
\end{lemma}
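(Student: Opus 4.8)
The plan is to reduce the claim to a single distance estimate and then finish with the triangle inequality. Let $u$ and $w$ be the two transmitting nodes, with $u$ lying in block $j$ of some region $R_1$ and $w$ lying in the identically labeled block $j$ of a distinct region $R_2$. Since regions have length $3$ and each block has length $l$, the smallest possible separation occurs when $R_1$ and $R_2$ are adjacent; any more distant pair of regions only pushes the two blocks farther apart, and a clipped (shorter) rightmost block can only increase the separation as well, so the full-length adjacent case is the worst one. I would therefore fix $R_1 = [p,p+3)$ and $R_2 = [p+3,p+6)$ and locate the blocks: block $j$ of $R_1$ is the half-open interval $[p+(j-1)l,\,p+jl)$ and block $j$ of $R_2$ is $[p+3+(j-1)l,\,p+3+jl)$.

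First I would bound $dist(u,w)$ from below. Because the blocks are half-open on the right we have $u < p+jl$, while $w \geq p+3+(j-1)l$; subtracting gives $dist(u,w) > 3-l$. Recalling from the definition of the partition that $l=\max\{(1-s),\gamma\}\leq 1$, this yields the strict bound $dist(u,w) > 3-l \geq 2$, i.e. $dist(u,w) > 2$. Establishing \emph{strictness} is the delicate point and the step I expect to be the main obstacle: in the boundary case $l=1$ the inequality $3-l\geq 2$ is tight, so the strict separation has to come from the half-open structure of the blocks rather than from $l\leq 1$ alone. Strictness genuinely matters, because a separation of exactly $2$ would admit a common receiver at the midpoint, at distance exactly $1$ from each transmitter, and distance $1$ is an admissible reception distance since the communication range $(s,1]$ is closed at $1$.

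Next I would convert $dist(u,w) > 2$ into non-interference. A transmission can affect a node $v$ in only two ways, by being received (requiring distance at most $1$) or by swamping it (requiring distance at most $s < 1$); in both cases the affected node lies within distance $1$ of the transmitter. Hence, if $v$ lies within distance $1$ of $u$, the triangle inequality gives $dist(w,v) \geq dist(u,w)-dist(u,v) > 2-1 = 1$, so $w$ neither reaches nor swamps $v$; symmetrically, every node within distance $1$ of $w$ is untouched by $u$. Thus no node lies in the combined reach-and-swamp zone of both transmitters, so neither a collision nor a cross-swamping event can occur between them. Because this distance bound is pairwise and holds for every pair of identically labeled blocks in distinct regions, the same argument shows that any receiver falls within distance $1$ of at most one such transmitter at a time, which upgrades the pairwise statement to the simultaneous schedule used by Algorithm~$\cB$. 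Apart from the strict-separation bookkeeping, the remainder is a routine application of the triangle inequality combined with the observation that both the reception range and the swamping range are contained in the radius-$1$ neighbourhood of a transmitter.
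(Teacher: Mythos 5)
Your proof is correct and follows essentially the same route as the paper's: both derive the separation bound $3-l \geq 2$ between identically labeled blocks in distinct regions, use the half-open structure of the partition (each segment excluding its rightmost point) to make the separation strict, and conclude that no point can lie within distance $1$ of both transmitters. Your write-up merely makes explicit what the paper leaves terse — the worst-case reduction to adjacent regions, the triangle inequality step, and the observation that the swamping zone is contained in the radius-$1$ neighbourhood — without changing the underlying argument.
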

\begin{proof}
Consider nodes $u,v$ in different regions and identically labeled blocks.
Each region has length $3$ and each block has length $l \leq 1$.
Because the block labels are identical in each region, the minimum distance between two identically-labeled blocks (that contain the nodes $u,v$) is $3 - l \geq 2$.
Since each line segment of the partition excludes its rightmost point, there is no point within distance $1$ of both $u$ and $v$.
\end{proof}

\begin{lemma}
\label{c:rangeOverlap}
Consider any pair $u,v$ of nodes within distance $1-s$.
Also consider the set $\cU$ of all nodes inclusively located between $u$ and $v$.
We have that $\Gamma(\cU) = \Gamma(u) \cup \Gamma(v)$.
\end{lemma}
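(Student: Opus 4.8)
The plan is to set up coordinates on the line and reduce the set-equality to two inclusions, the second of which is the only one requiring the hypothesis $dist(u,v)\le 1-s$. Place $u$ at coordinate $p_u$ and $v$ at coordinate $p_v$, and assume without loss of generality that $p_u \le p_v$; the hypothesis then reads $p_v - p_u \le 1-s$. Recall that in the highway model a node $x$ at position $p_x$ is a neighbor of a node $w$ at position $p_w$ exactly when $s < |p_x - p_w| \le 1$, so that $\Gamma(w)$ consists of the nodes whose positions lie in the \emph{left range} $[p_w-1,\, p_w-s)$ or the \emph{right range} $(p_w+s,\, p_w+1]$. Throughout, I read $\Gamma(\cU)$ as $\bigcup_{w\in\cU}\Gamma(w)$.

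The inclusion $\Gamma(u)\cup\Gamma(v)\subseteq \Gamma(\cU)$ is immediate: since $\cU$ contains all nodes located between $u$ and $v$ inclusively, both $u\in\cU$ and $v\in\cU$, whence $\Gamma(u)\subseteq\Gamma(\cU)$ and $\Gamma(v)\subseteq\Gamma(\cU)$.

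For the reverse inclusion $\Gamma(\cU)\subseteq\Gamma(u)\cup\Gamma(v)$, I would fix an arbitrary $w\in\cU$, so that $p_u \le p_w \le p_v$, and show $\Gamma(w)\subseteq\Gamma(u)\cup\Gamma(v)$; taking the union over $w$ then finishes. The argument is purely about position ranges. First I observe that the hypothesis makes the right ranges of $u$ and $v$ merge into a single interval: since $p_v + s \le p_u + 1$, we have $(p_u+s,\, p_u+1]\cup(p_v+s,\, p_v+1] = (p_u+s,\, p_v+1]$, with no gap. Because $p_u \le p_w \le p_v$, the right range $(p_w+s,\, p_w+1]$ of $w$ satisfies $p_w+s \ge p_u+s$ and $p_w+1 \le p_v+1$, so it is contained in this merged interval. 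The mirror-image computation shows that the left ranges merge into $[p_u-1,\, p_v-s)$ and that the left range $[p_w-1,\, p_w-s)$ of $w$ lies inside it. Thus every position reachable from $w$ is reachable from $u$ or from $v$; since any $x\in\Gamma(w)$ is an actual node whose position lies in one of these ranges, we conclude $x\in\Gamma(u)\cup\Gamma(v)$.

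The only real content, and the single place where the hypothesis $dist(u,v)\le 1-s$ is used, is the merging step: I must verify that the two endpoint ranges overlap (or abut) rather than leaving a gap into which some intermediate node could reach while neither $u$ nor $v$ can. This is precisely what $p_v + s \le p_u + 1$, i.e.\ $p_v-p_u\le 1-s$, guarantees; were the separation to exceed $1-s$, a gap would open between $u$'s range and $v$'s range and the statement would fail. Everything else reduces to the monotonicity of the interval endpoints $p_w-1$, $p_w-s$, $p_w+s$, $p_w+1$ in $p_w$, which is routine.
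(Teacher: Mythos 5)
Your proposal is correct and follows essentially the same route as the paper's proof: you place the nodes on a coordinate line, use $dist(u,v)\le 1-s$ to show the right ranges $(p_u+s,p_u+1]$ and $(p_v+s,p_v+1]$ merge into one gap-free interval containing the right range of any intermediate node $w$, and argue symmetrically on the left, exactly as the paper does. Your only additions — explicitly stating the trivial inclusion $\Gamma(u)\cup\Gamma(v)\subseteq\Gamma(\cU)$ and checking the open/closed endpoint bookkeeping in the merging step — are harmless refinements of the same argument.
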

\begin{proof}
Consider two nodes $u,v$ at distance $d \leq 1-s$ from one-another;
$u$ is to the left of $v$ and $u$ is at coordinate $0$.
Consider the right part of the range of $u$ and $v$.
Then, the range of $u$ to the right covers the interval $(s,1]$.
Similarly, the range of $v$ to the right covers the interval $(s+d, 1+d]$.
Since $0 < d \leq 1-s$, we have that $s + d \leq s + 1 - s = 1$.
Hence, the functional portions of the ranges of $u$ and $v$ overlap and cover the interval $(s, 1 + d]$.

Consider any node $w$ between $u$ and $v$, i.e., at distance $dist(u,w)$ from $u$, with $0 < dist(u,w) < d$.
The range of $w$ to the right covers the interval $(s + dist(u,w), 1 + dist(u,w)]$.
Since $0 < dist(u,w) < d$, the range of $w$ to the right is completely included in the ranges of $u$ and $v$ to the right.
The argument is symmetric for the left.
\end{proof}

In the following sections, we describe communication procedures that will enable nodes to broadcast messages to all nodes of their networks.

\subsection{Procedure $\cD^*$ for Neighborhood Discovery}
We now define procedures used to communicate once from each node to all other nodes within distance $1$ of them.
We refer to this process as {\em Neighborhood Discovery}.

\subsubsection{Procedure $\cD$}
We now present Procedure $\cD$, in which nodes in distinct homes inside a region sequentially send a message while other nodes listen.
This procedure is executed in parallel over all regions and for all $(block, home)$ labels sequentially.
All homes with some $(block, home)$ label transmit a message while all other nodes listen for incoming messages.
More formally, refer to the code for Procedure~$\cD$.
\begin{algorithm}{}
{\bf Procedure $\cD$}
\begin{algorithmic}
\STATE {\bf In parallel for all regions}
\STATE $N_u \leftarrow \emptyset$ // the set of nodes known to $u$
\STATE $H_u \leftarrow$ the (block, home) label of $u$
\FOR{$block = 1..\mu$}
    \FOR{$home = 1..\nu$}
	   \IF{$H_u = (block, home)$}
            \STATE Transmit $hello$
        \ELSIF{a $hello$ is  heard}
            \STATE $N_u \leftarrow N_u \cup (block, home)$
        \ENDIF
    \ENDFOR
\ENDFOR
\end{algorithmic}
\end{algorithm}
By Lemma~\ref{l:noCollide}, no collision occurs in this procedure.
Hence we claim that each node will gain knowledge of all nodes located within its communication range as a result of Procedure $\cD$.
We have the following lemma:
\begin{lemma}
\label{l:procD}
After one execution of Procedure $\cD$, nodes know of all nodes within distance $1$ and greater than $s$ of them.
\end{lemma}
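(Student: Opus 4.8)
The plan is to fix an arbitrary node $u$ and an arbitrary neighbor $v \in \Gamma(u)$ (so $s < dist(u,v) \le 1 = r$) and to prove that $v$ records $u$'s label in $N_v$ during the single iteration of Procedure $\cD$ that processes the label $H_u$. Since $u$ and $v$ are arbitrary, this shows that every node learns of all of its neighbors, which is exactly the claim. Because the nested loops run through every $(block,home)$ label, there is indeed a step at which $u$ transmits $hello$, so the whole argument reduces to showing that at that step $v$ is in the listening branch and that $v$'s reception is neither lost to a collision nor suppressed by swamping.

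First I would argue that $v$ is receiving during $u$'s transmission step, i.e.\ that $u$ and $v$ do not carry the same label $H_u$. If they lay in the same region they would occupy the same home, contradicting Lemma~\ref{l:singleHome}; and if they lay in distinct regions with identically labeled blocks, the geometric fact established in the proof of Lemma~\ref{l:noCollide} --- that no point is within distance $1$ of both such nodes --- would be violated, since $dist(u,v) \le 1$ makes $v$ itself such a point. Hence neighbors always carry distinct labels, and $v$ executes the listening branch while $u$ transmits.

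Next I would rule out interference at $v$. During $u$'s step the only other transmitters are the nodes sharing the label $H_u$ in the remaining regions. By the same geometric fact underlying Lemma~\ref{l:noCollide}, none of these lies within distance $1$ of $v$, so none collides with $u$ at $v$; and since $s < r = 1$, none lies within distance $s$ of $v$ either, so none swamps $v$. Finally, $u$ itself does not swamp $v$ because $dist(u,v) > s$. Thus $u$ is the unique neighbor of $v$ that transmits, no node within distance $s$ of $v$ transmits, and by the reception rule of the model $v$ receives $hello$ and adds $H_u$ to $N_v$.

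The main obstacle is the interference analysis of the last step, which is precisely where the partition is designed to pay off: the crucial leverage is that $v \in \Gamma(u)$ already forces $v$ within distance $1$ of $u$, and Lemma~\ref{l:noCollide} then simultaneously forbids every competing same-label transmitter from being within distance $1$ --- hence also within distance $s$ --- of $v$, so that collision-freeness and swamping-freeness follow at once. I would also remark, for soundness, that any $hello$ which $v$ does hear must originate from a node within distance $1$ and, since $v$ was not swamped, at distance greater than $s$, so $v$ records only genuine neighbors; but as the statement asks only that nodes learn of \emph{all} their neighbors, the completeness direction above suffices.
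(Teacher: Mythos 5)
Your proposal is correct and follows essentially the same route as the paper, which also fixes a node and deduces from Lemma~\ref{l:noCollide} (whose proof shows no point lies within distance $1$ of two same-label transmitters) that every $hello$ is received cleanly as the loops exhaust all $(block,home)$ labels. You merely spell out details the paper's terse proof leaves implicit --- that neighbors necessarily carry distinct labels (via Lemma~\ref{l:singleHome} and the region-separation bound) and that the distance-$>1$ separation rules out swamping as well as collisions since $s<1$ --- which are genuine, correct refinements rather than a different argument.
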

\begin{proof}
Fix any node $u$ and the set $\Gamma(u)$ of all nodes within its communication range.
By Lemma~\ref{l:noCollide}, no message collision can occur during Procedure $\cD$.
Since the procedure makes nodes in all the $(block, home)$ couples transmit, it then follows that $u$ must receive messages from all the nodes in the set $\Gamma(u)$.
\end{proof}

By the previous lemma, since the graph is connected, each node will discover at least one node within its communication range by the end of Procedure $\cD$.
This fact allows all nodes to discover all other nodes within distance $s$ of them by Procedure $\cD^*$.

\subsubsection{Procedure $\cD^*$}
Recall that, in our model, it is not possible for any node to hear messages from nodes within distance $s$ of them.
Observe that the length of the path between two nodes within distance $s$ is only bounded above by the diameter $D$, in many cases.
We now concentrate on a time-guaranteed procedure for discovery of nodes within distance $1$.

Consider Procedure $\cD_{(b, h)}$ which uses the absence of distinguishable messages from collisions to discover nodes within distance $s$.
During this entire procedure, the node located in block $b$ and house $h$, if it exists, transmits a hello message;
all other nodes also transmit a hello message according to a schedule determined by their $(block, home)$ identifiers.
If the node at $(b,h)$ exists and is known, each turn when no message is heard reveals the presence of a node at $(block, home)$.
This detection method using collisions was proposed in a different context in \cite{KP2002}.
\begin{algorithm}{}
{\bf Procedure $\cD_{(b, h)}$}
\begin{algorithmic}
\STATE // $N_u$ is the set of nodes known to $u$
\STATE // $H_u$ is the (block, home) label of $u$
\STATE {\bf In parallel for all nodes $u \in V$}
\FOR{$block = 1..\mu$}
    \FOR{$home = 1..\nu$}
    	\IF{$H_u = (block, home)$ {\bf OR} $H_u = (b, h)$}
	       \STATE Transmit $hello$
	   \ELSIF{no $hello$ is heard {\bf AND} $(b, h) \in N_u$}
	       \STATE $N_u \leftarrow N_u \cup (block, home)$
	   \ENDIF
    \ENDFOR
\ENDFOR
\end{algorithmic}
\end{algorithm}

\begin{lemma}
\label{l:procDbh}
By Procedure $\cD_{(b,h)}$, nodes neighbor to $(b,h)$ know all other nodes within distance $1$ of them in time $\Theta(g)$.
\end{lemma}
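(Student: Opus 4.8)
The plan is to exploit the node at $(b,h)$ as a permanent beacon. Since it transmits in every single turn and, by Procedure~$\cD$ together with Lemma~\ref{l:procD}, every neighbour $w$ of $(b,h)$ has already recorded $(b,h)$ in $N_w$ (so the guard $(b,h)\in N_u$ is satisfied for $u=w$), each such $w$ would hear a $hello$ in every turn were it not for interference caused by the second transmitter. I would first fix an arbitrary neighbour $w$ of $(b,h)$ and analyse what $w$ perceives during the turn dedicated to a generic label $(block,home)$, when every node carrying that label across all regions transmits simultaneously alongside $(b,h)$.

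The heart of the argument is a case analysis on the node $z$, if any, that carries the label $(block,home)$ and lies within distance $1$ of $w$. By Lemma~\ref{l:noCollide}, at most one node with a given label can be within distance $1$ of $w$, since identically labelled blocks in distinct regions share no point within distance $1$; hence the only transmitters able to affect $w$ in this turn are $(b,h)$ and this single $z$. If $z$ lies within distance $s$ of $w$, it swamps $w$ and $w$ hears silence; if $z$ lies at distance in $(s,1]$, then $z$ and $(b,h)$ are two simultaneous neighbours of $w$ whose messages collide, so again $w$ hears silence; and if no such $z$ exists, then only the beacon reaches $w$ and $w$ hears a $hello$. Because $(b,h)$ is itself a communication neighbour of $w$ (distance in $(s,1]$), it never swamps $w$, so these cases are exhaustive and mutually exclusive. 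This yields the equivalence: \emph{$w$ hears silence in turn $(block,home)$ if and only if some node within distance $1$ of $w$ carries the label $(block,home)$}, which is exactly the rule of the ELSIF branch. Iterating over all labels, $w$ records every node within distance $1$ of it, namely the neighbours in $(s,1]$ it already knew from $\cD$ and, newly, the swamped nodes within distance $s$.

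For the running time I would simply count the iterations of the nested loop: there are $\mu=\lceil 3/l\rceil$ blocks per region and $\nu=\lceil l/\gamma\rceil$ homes per block, giving $\mu\nu=\lceil 3/l\rceil\,\lceil l/\gamma\rceil$ turns, each a single round and all executed in parallel across regions. Using $\gamma\le l\le 1$ one obtains $\mu\nu=\Theta(1/\gamma)=\Theta(g)$, which gives the claimed $\Theta(g)$ bound.

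The main obstacle I anticipate is the correctness case analysis rather than the timing. One must argue cleanly that the constant beacon $(b,h)$ guarantees a $hello$ is heard precisely when no relevant nearby node transmits, and this relies on Lemma~\ref{l:noCollide} to rule out a second same-label transmitter within distance $1$ of $w$, which could otherwise create an ambiguous silence from a collision in an adjacent region. The delicate point is verifying that swamping (distance $\le s$) and collision (distance in $(s,1]$) together cover all nodes within distance $1$ of $w$, so that a single silence test detects both kinds of node uniformly.
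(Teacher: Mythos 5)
Your proposal is correct and takes essentially the same approach as the paper's proof: the node at $(b,h)$ serves as an every-round beacon, Lemma~\ref{l:noCollide} guarantees at most one same-label transmitter within distance $1$ of any listener $w$, so silence at $w$ (whether caused by swamping at distance $\leq s$ or by a collision at distance in $(s,1]$) pinpoints a node in the current $(block,home)$, and the round count $\mu\nu = \lceil 3/l\rceil\lceil l/\gamma\rceil \in \Theta(g)$ is computed identically. Your explicit three-way case analysis merely fleshes out what the paper states tersely (with the harmless caveat that in the single turn where $(block,home)=(b,h)$ the beacon itself is the labelled node, so $w$ hears $hello$ rather than silence, but $w$ already knows $(b,h)$ from Procedure $\cD$).
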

\begin{proof}
The time complexity of Procedure $\cD_{(b,h)}$ is in $\Theta(\mu\nu)$.
With $\gamma \leq l \leq 1$, we have that $\mu = \lceil 3/l \rceil \in \Theta(1/l)$ and $\nu = \lceil l/\gamma \rceil \in \Theta(l/\gamma)$.
Hence,
$$
\mu\nu \in \Theta((1/l)(l/\gamma)) = \Theta(1/\gamma) = \Theta(g).
$$

We now prove correctness.
Fix a node $u$ which shares a link with the node $(b,h)$.
During the execution of Procedure $\cD_{(b,h)}$, the node $(b,h)$ will transmit messages at every round.
A message from $(b,h)$ will be heard by $u$ at every round when no collision occurs at $u$.
Furthermore, when no message can be distinguished, another node within distance $1$ of $u$ must be transmitting from the home with label $(block,home)$ (as defined in the procedure).
Since Procedure $\cD_{(b,h)}$ schedules all nodes to transmit in pairs with $(b,h)$,
upon completion of this procedure, the node $u$ will have discovered all nodes $w$ for which the distance $dist(u,w)$ from $u$ is at most $1$.
\end{proof}

Now consider Procedure $\cD^*$ consisting of one execution of Procedure $\cD$ followed by the execution of Procedure $\cD_{(b,h)}$ for all $(b,h) \in \{1, 2, \ldots, \mu\} \times \{1, 2, \ldots, \nu)\}$.
In plain words, Procedure $\cD^*$ schedules colliding transmissions for all $(block, home)$ couple pairs
$$
((b,h), (b', h')) \in \{\{1, 2, \ldots, \mu\} \times \{1, 2, \ldots, \nu)\}\}^2.
$$
More formally, refer to the pseudo code for Procedure $\cD^{*}$.
\begin{algorithm}{}
{\bf Procedure $\cD^{*}$}
\begin{algorithmic}
\STATE {\bf Call} Procedure $\cD$
\FOR{$b = 1..\mu$}
    \FOR{$h = 1..\nu$}
    	\STATE {\bf Call} Procedure $\cD_{(b, h)}$
    \ENDFOR
\ENDFOR
\end{algorithmic}
\end{algorithm}

\begin{lemma}
\label{l:procD*}
By Procedure $\cD^*$, nodes know all other nodes within distance $1$ of them in time $\Theta(g^2)$.
\end{lemma}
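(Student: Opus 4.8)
The plan is to obtain correctness by composing the two discovery guarantees already established, and to obtain the time bound by a direct round count. Procedure $\cD^*$ is the concatenation of one run of Procedure $\cD$ with a run of Procedure $\cD_{(b,h)}$ for every label pair $(b,h) \in \{1,\ldots,\mu\}\times\{1,\ldots,\nu\}$, so both tasks reduce to bookkeeping over the sub-procedures.

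For correctness, I would fix an arbitrary node $u$ and argue that it ends up knowing every node within distance $1$ of it. First, by Lemma~\ref{l:procD}, the initial call to Procedure $\cD$ places into $N_u$ every node in $\Gamma(u)$, i.e. every node at distance in $(s,1]$ from $u$. By Fact~\ref{f:connect}, since the network is connected $u$ has at least one such neighbor; let it occupy home $(b',h')$. The key point is that this neighbor is then already recorded, so the precondition $(b',h') \in N_u$ of Procedure $\cD_{(b',h')}$ is met. Because $\cD^*$ iterates over all label pairs, it in particular executes $\cD_{(b',h')}$, and $u$ is by construction a neighbor of $(b',h')$; hence Lemma~\ref{l:procDbh} applies and guarantees that after this single call $u$ knows all nodes within distance $1$ of it, including the swamped nodes at distance at most $s$ that $\cD$ alone cannot reveal. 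Since $u$ was arbitrary, every node achieves full knowledge of its distance-$1$ neighborhood.

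For the running time, I would count rounds additively across the phases. The initial Procedure $\cD$ runs its double loop over the $\mu\nu$ label pairs once, costing $\Theta(\mu\nu)=\Theta(g)$ rounds, using the estimate $\mu\nu\in\Theta(g)$ established in the proof of Lemma~\ref{l:procDbh}. The second phase performs one call to $\cD_{(b,h)}$ for each of the $\mu\nu$ label pairs, and each such call costs $\Theta(g)$ by Lemma~\ref{l:procDbh}; the phase therefore costs $\mu\nu\cdot\Theta(g)=\Theta(g)\cdot\Theta(g)=\Theta(g^2)$ rounds. Summing, $\cD^*$ finishes in $\Theta(g)+\Theta(g^2)=\Theta(g^2)$ rounds, as claimed.

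The argument is essentially a composition of earlier results, so I do not expect a hard technical step; the one point deserving care is the logical gap between what $\cD$ guarantees (neighbors at distance strictly greater than $s$) and what $\cD^*$ must deliver (all nodes within distance $1$, including those at distance at most $s$). Making sure this gap is bridged --- namely that connectivity forces the existence of a genuine neighbor whose home label, already stored by $\cD$, validates the precondition of some $\cD_{(b,h)}$ call --- is the crux of the correctness proof; once that is in place, the time bound is a routine product of the two $\Theta(g)$ factors.
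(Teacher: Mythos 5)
Your proof is correct and takes essentially the same route as the paper's: correctness via Fact~\ref{f:connect} combined with Lemma~\ref{l:procDbh}, and the time bound via the product of $\mu\nu$ calls each costing $\Theta(\mu\nu) = \Theta(g)$ rounds, giving $\Theta(g^2)$. Your explicit observation that the initial run of Procedure $\cD$ places the neighbor's label $(b',h')$ into $N_u$, thereby validating the precondition $(b,h) \in N_u$ inside Procedure $\cD_{(b,h)}$, spells out a step the paper leaves implicit but does not change the argument.
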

\begin{proof}
The time complexity of Procedure $\cD_{(b,h)}$ is in $\Theta(\mu\nu)$.
Hence, the time complexity of Procedure $\cD^*$ is in $\Theta(\mu^2\nu^2)$.
By the above and by Lemma~\ref{l:procDbh}, the time complexity of Procedure $\cD^*$ is therefore in $\Theta(g^2)$.

We now prove correctness.
By Fact~\ref{f:connect}, for any node $u$, since the graph is connected, there exists a node $(b,h)$ such that Lemma~\ref{l:procDbh} will hold.
By the above and by Lemma~\ref{l:procDbh}, all nodes know all other nodes that are within distance $1$ of them.
\end{proof}

It remains open whether or not the time $\theta(g^2)$ for neighbourhood discovery is optimal.

One degenerate case of swamping is when $s < \gamma$;
then, swamping has no real effect on the network, which becomes identical to a congruent GRN.
In that case, the process of discovering the neighbours takes only the time necessary for all nodes to announce their presence once, $\theta(g)$.
This is only true because of the absence of nodes with which nearby nodes can not communicate.

However, once we have $s \geq \gamma$, some links of the congruent GRN are deleted in the network with swamping;
to discover nodes at close proximity then becomes a non-trivial, collaborative task.
When messages must remain small, it is impossible to share locations of many other nodes to speed up the neighbourhood discovery process.

For small enough $s$ and unbounded message size, the task of neighbourhood discovery may be sped up.
In the 1-dimensional case, the length of paths between nearby nodes seems bounded by small enough value $\beta$.
Therefore, if nodes transmit long messages containing their current known mapping of neighbour nodes on a turn basis, repeating this process a small number of times would be sufficient to perform the neighborhood discovery process, i.e., in time $\in \Theta(\beta g)$.
We remind the reader however, that we are studying the case where messages do not have unbounded length.

With knowledge of all nodes within distance $1$, nodes have the basic tools to select distinguished nodes to relay messages for all nodes of a block.
We discuss such a procedure in the following subsection.

\subsection{Selection of Spokesman Nodes}
We now describe a procedure for selection of distinguished nodes for each block known as {\em spokesmen}.
We wish to select these spokesmen in order to avoid collisions and speed up the broadcasting process.
Before we concentrate on the different cases, we present the following fact.

\begin{fact}
\label{f:knownLocation}
Given location-awareness, if a sender includes its location inside a message, then a receiver can determine all points where the message may be received.
\end{fact}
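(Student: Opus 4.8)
The plan is to observe that the set of points at which a transmission may be received is a fixed geometric function of the sender's position alone, together with the two globally-known model parameters, namely the swamping distance $s$ and the communication distance (here $1$). Since the sender's position is carried inside the message by hypothesis, the receiver will possess exactly the data needed to reconstruct this set, and the fact will follow by direct substitution.

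First I would recall from the communication model that a transmission originating at a node positioned at coordinate $x$ reaches precisely those points $p$ satisfying $s < dist(x,p) \le 1$; this is the defining property of a node's communication range. On the line, this reception region is the union of two intervals, $[x-1, x-s) \cup (x+s, x+1]$, but the reasoning is dimension-independent. Next I would note that the receiver knows all three quantities determining this region: it learns $x$ directly from the message, and by the model assumptions every node knows both $s$ and the communication distance. Substituting these known values into the inequality $s < dist(x,p) \le 1$ yields an explicit description of the reception region, which the receiver can therefore compute.

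The only mild subtlety will be to be precise about the meaning of ``where the message may be received'': the fact concerns the geometric set of positions lying within the sender's communication range, i.e.\ the potential recipients of the message, and not whether reception actually succeeds at a given position. Actual success would additionally depend on collisions and on swamping induced by other transmitters, and hence on network-state information the receiver need not possess; this is deliberately outside the scope of the statement. Once this distinction is made explicit, the claim is an immediate consequence of location-awareness together with knowledge of the model parameters, and no further argument is required.
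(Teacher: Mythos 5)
Your proposal is correct and follows essentially the same route as the paper's own (very brief) proof: the sender embeds its location, and since every node also knows $s$ and the communication distance $1$, the receiver can reconstruct the reception region $\{p : s < dist(x,p) \le 1\}$ directly. Your added clarification distinguishing the geometric reception region from actual successful reception (which depends on collisions and swamping) is a reasonable precision the paper leaves implicit, but it does not change the argument.
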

\begin{proof}
The sender knows its own location and therefore can incorporate this as part of his message.
The receiver then knows the origin of the received message and hence can determine the covered region.
\end{proof}

Consider the spokesman selection procedure that elects, for each block,
\begin{enumerate}
\item {\em right (left) boundary spokesmen}: the node in the rightmost (leftmost) home known to be completely contained within the transmission range of a sender, if this home is the rightmost (leftmost) home on the block;
\item {\em right (left) range spokesmen}: the node in the rightmost (leftmost) home known to be completely contained in the transmission range of a sender, if this home is not the rightmost (leftmost) home on the block;
\item {\em right (left) potential spokesmen}: the node in the rightmost (leftmost) home known to be partially contained within the transmission range of a sender.
\end{enumerate}

We now show that the spokesman selection procedure making the above selections selects unique spokesmen for each type.
\begin{lemma}
\label{c:uniqueSpokesmen}
The spokesman selection procedure selects at most one node for each spokesman type.
\end{lemma}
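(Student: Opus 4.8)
The plan is to reduce the statement to two elementary facts and then dispatch each spokesman type mechanically. The first fact is Lemma~\ref{l:singleHome}: each home contains at most one node. The second is that the homes of a block are linearly ordered by their labels $1, 2, \ldots, \nu$, so that the phrase ``the rightmost (leftmost) home satisfying property $P$'' names a \emph{single} home whenever any home satisfies $P$ at all, and names none otherwise. Granting these, every one of the six types (right/left $\times$ boundary/range/potential) is defined as ``the node in [a uniquely determined home],'' whence the selection is at most one node. I would state this scheme up front and then verify it type by type.

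I would argue the three right-handed types and invoke the left--right symmetry of the partition for the remaining three. For the right potential spokesman, the defining predicate is ``this home is partially contained in the transmission range of a sender''; among the homes $1, \ldots, \nu$ that satisfy it, the one with the largest label is unique, and by Lemma~\ref{l:singleHome} it holds at most one node, giving at most one right potential spokesman. For the right boundary and right range spokesmen, the relevant predicate is instead ``this home is completely contained in the transmission range of a sender''; the rightmost home satisfying this predicate is again a single well-defined home, say home $j$, holding at most one node.

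The step I expect to require the most care is the conditional clause that separates boundary from range, since both definitions literally refer to the \emph{same} candidate home $j$. The resolution is that they attach complementary side conditions: $j$ is taken as a boundary spokesman only when $j$ is the rightmost home on the block (i.e. $j = \nu$), and as a range spokesman only when it is not ($j \neq \nu$). These conditions are mutually exclusive, so the node in $j$ is claimed by at most one of the two types and by neither when the opposite condition fails; in every case each type selects at most one node. (As a by-product one even gets that at most one of the right boundary/right range spokesmen exists, although the lemma asks only for the per-type bound.) Reflecting ``rightmost'' to ``leftmost'' throughout yields the three left-handed types verbatim, which completes the argument.
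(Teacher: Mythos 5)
There is a genuine gap: your argument assumes that the defining predicate of each spokesman type is evaluated against a single, globally shared body of knowledge, so that ``the rightmost home known to be completely contained in the transmission range of a sender'' automatically names one home. In the model this is exactly what must be proved. There may be \emph{several} senders whose ranges meet the block, and a node learns the limits of a sender's range only if it actually receives that sender's message (Fact~\ref{f:knownLocation}); knowledge is local. So, a priori, two nodes could each sit in ``the rightmost home known \emph{to them} to be fully contained in the range of a sender \emph{they heard},'' and both would declare themselves right range spokesmen. Your appeal to the linear order of homes $1,\ldots,\nu$ shows only that the rightmost home satisfying an existentially quantified, globally evaluated predicate is unique --- a statement that is trivially true and is not what the lemma asserts. (Your parenthetical by-product, that at most one of the right boundary/right range spokesmen exists, inherits the same unproved global-knowledge assumption.)

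The paper's proof is devoted precisely to closing this hole, and none of its ingredients appear in your proposal. It first disposes of the case $l=\gamma$ (one home per block), then for $l=1-s$ uses the geometric fact that the functional width of a range equals the block length, so any sender's range meeting the block fully encloses a \emph{boundary home}. For all senders enclosing the same boundary home, the intersections of their ranges with the block share that block edge and are therefore totally ordered by inclusion; the largest one is the range of some sender $u$, every node lying in any of these intersections lies in $u$'s range and hence, by Fact~\ref{f:knownLocation}, knows $u$'s limits --- this is what forces all local selections to coincide on a single candidate home. Finally, senders enclosing \emph{different} boundary homes generate spokesmen of opposite handedness (right versus left types), so they cannot clash within a type. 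Your proof is fine for the boundary types (there the candidate home is fixed as the extreme home of the block, and Lemma~\ref{l:singleHome} finishes it, as in the paper), but for the range and potential types you would need to reconstruct the nesting-plus-knowledge argument above; without it the claimed uniqueness does not follow.
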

\begin{proof}
Given that right and left boundary spokesmen are unique by definition (those nodes in the home that is closest to the block boundaries), we prove the lemma for right and left range and potential spokesmen.
In the case when $l = \gamma$, there is only one home per block, hence the lemma holds in this case.
We now prove the lemma for the case when $l = 1-s$.

Given that the functional portion of the communication range of a node is of size $1 - s = l$, the range of a transmitter always encloses at least one of the homes that is closest to the block boundaries;
call this home a boundary home.
For any set $\cS$ of transmitters whose ranges enclose a same boundary home, the intersection of their communication ranges with the block $t$ defines a set $\cI$ of intervals for which one is the largest.
This largest interval is the communication range of a node $u \in \cS$ that includes all other communication ranges inside of the set $\cI$.
By Fact \ref{f:knownLocation}, all nodes located inside this interval know the limits of the communication range of $u$.
It follows that the potential and range spokesmen for the set of nodes $\cS$ are unique.
These spokesmen are right (left) potential and range spokesmen if the leftmost (rightmost) home of $t$ is completely included in the range of $u$ and not the rightmost (leftmost) home of $t$.

It also follows from the above discussion that for any pair of transmitters $u$ and $v$ whose ranges do not enclose a same boundary home, the spokesmen types defined will be different (right vs. left spokesmen).
\end{proof}

\subsection{Broadcasting Algorithm $\cB$}
In order to complete the broadcasting algorithm, we need a final procedure to transmit the message $m$ from the source to all other nodes of the network.
We now describe Procedure $\cT$.

\begin{algorithm}{}
{\bf Procedure $\cT$}
\begin{algorithmic}
\STATE $S_u \leftarrow $the label of the block,home containing $u$
\STATE {\bf In parallel for all regions}
\FOR{$block = 1..\mu$}
    \IF{$S_u = block$}
        \STATE update spokesman status
        \STATE Spokesmen transmit the message $m$ in the following order:
        \STATE 1) left boundary spokesman,
        \STATE 2) right boundary spokesman,
        \STATE 3) left range spokesman,
        \STATE 4) right range spokesman,
        \STATE 5) left potential spokesman,
        \STATE 6) right potential spokesman
    \ELSE
        \STATE Listen to incoming messages for $6$ rounds
    \ENDIF
\ENDFOR
\end{algorithmic}
\end{algorithm}

\begin{lemma}
\label{l:ProcT}
Procedure $\cT$ broadcasts the message correctly through the network in time $O(D/l)$.
\end{lemma}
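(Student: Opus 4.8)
The plan is to prove the two halves of the claim separately: that each invocation of Procedure $\cT$ correctly advances the set of informed nodes without introducing collisions, and that $O(D)$ invocations, each costing $O(1/l)$ rounds, suffice to reach every node. First I would record the per-invocation cost. A single pass of the outer loop treats the $\mu = \lceil 3/l\rceil$ block labels in turn, and for each label it schedules at most six spokesman transmissions (the left and right boundary, range, and potential spokesmen), which are well defined and unique by Lemma~\ref{c:uniqueSpokesmen} once neighbourhood discovery has completed (Lemma~\ref{l:procD*}). Hence one pass of $\cT$ takes $6\mu = O(1/l)$ rounds.

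Next I would establish the key propagation invariant: whenever some node of a block $t$ holds $m$, the spokesmen scheduled for $t$ collectively transmit $m$ to the set of all neighbours of nodes on $t$. This is exactly what the spokesman types were built to guarantee: by Lemma~\ref{c:rangeOverlap}, the union of the ranges of the extreme informed nodes of a length-$l$ block already covers the neighbourhood of every node between them, and the boundary/range/potential distinction (together with Fact~\ref{f:connect}) ensures that both the left- and right-facing portions of this union are realised by transmitting nodes. Because the six spokesmen transmit in separate rounds, no two of them collide; and by Lemma~\ref{l:noCollide} identically labelled blocks in distinct regions transmit without mutual interference. Since $\cT$ processes one block label at a time across all regions in parallel, the only simultaneous transmitters lie in distinct regions and therefore cause no collision. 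This shows that every pass correctly delivers $m$ from an informed block to all of its graph-neighbours.

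The heart of the argument, and the step I expect to be the main obstacle, is bounding the number of passes by $O(D)$. I would argue a geometric progress claim: consider the rightmost region containing an informed node. Within one pass the labels are swept $1,2,\dots,\mu$ from left to right, so an informed leftmost block of that region forwards $m$ to the next block, which in turn forwards it when its own label is reached, and so on; by the propagation invariant the entire region of length $3$ becomes informed in a single pass, and its last block delivers $m$ into the first block of the next region to the right. Thus the informed frontier advances by at least one region, i.e.\ by geometric distance $\Theta(1)$, per pass in the sweep direction. The subtlety is that the fixed left-to-right label order favours rightward propagation over leftward propagation, where the message would otherwise crawl only one block per pass; I would resolve this exactly as Algorithm~$\cA$ resolves it, by running $\cT$ under both label orderings so that each direction enjoys the full-region sweep, which changes the round count only by a constant factor.

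Finally I would combine the pieces. The farthest node from the source lies at hop-distance at most $D$, and since every edge spans geometric length at most $1$, that node lies within geometric distance at most $D$ of the source; hence at most $O(D/3)=O(D)$ regions separate the source from any node. As each pass advances the informed frontier by one region in each direction, $O(D)$ passes inform all nodes, and multiplying by the $O(1/l)$ cost of a pass yields the claimed $O(D/l)$ broadcasting time. I would note in passing that reconciling the single-pass pseudocode with this bound simply amounts to observing that Procedure $\cT$ is iterated $O(D)$ times by the enclosing broadcast process.
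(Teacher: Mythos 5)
Your per-pass accounting ($6\mu = O(1/l)$ rounds) and your propagation invariant (the spokesmen of an informed block reach all graph-neighbours of the informed nodes on that block, collision-freely, via Lemma~\ref{c:rangeOverlap} and Lemma~\ref{l:noCollide}) both match the paper. The gap is in your progress argument. You claim that once one block of a region is informed, the left-to-right sweep informs the entire region within a single pass, block by block, so the geometric frontier advances by one region per pass, and $O(D)$ passes suffice because every node lies within geometric distance $D$ of the source. This fails under swamping: a transmitter in block $i$ reaches only the annulus at distances in $(s,1]$, while the adjacent block $i+1$ lies within distance $2l = 2(1-s)$ of it (in the case $l = 1-s$), which is typically much smaller than $s$ (take $s$ close to $1$); so the message does not hop from a block to the next block at all, and nodes geometrically close to informed nodes may be reachable only along graph paths whose hop length is bounded only by $D$ --- the paper stresses exactly this point when motivating the model. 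Consequently neither ``the whole region becomes informed in one pass'' nor ``the informed frontier advances by $\Theta(1)$ geometrically per pass'' holds, and your final count conflates geometric distance with hop distance: informing everything within geometric distance $D$ region by region is not something the algorithm achieves, and is in any case not how the $O(D)$ bound must be obtained.

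The paper's proof measures progress in hop distance instead: it compares Procedure $\cT$ on $G$ with the collision-free flooding algorithm $\cF$ on the network $G'$ (same nodes and links, but no collisions), which completes in $\Theta(D)$ rounds. Since within every $\Theta(1/l)$ rounds the spokesmen of each informed block reach the full neighbourhood $\Gamma(\cU)$ of the informed nodes on that block (Lemma~\ref{c:rangeOverlap}), each pass dominates one round of $\cF$ regardless of the direction in which the message travels --- which also makes your two-orderings fix unnecessary. The repair for your proof is to replace the region-frontier claim by this simulation: after $k$ passes, the informed set contains every node at hop distance at most $k$ from the source, so $D$ passes of cost $O(1/l)$ each give the claimed $O(D/l)$.
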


\begin{proof}
Consider a network $G$ of diameter $D$ built by the adversary under the swamping model.
Consider also the network $G'$ with the same nodes and links as $G$, but where nodes may receive messages from multiple neighbors in one round without collisions.
Let the broadcasting algorithm $\cF$ execute such that, when a node receives a message $m$ the first time, it transmits this message to all its neighbors the next round.
The  algorithm $\cF$ executes in $\Theta(D)$ rounds on the network $G'$.
We prove the lemma statement by comparing the execution of Procedure $\cT$ on $G$ to the execution of Algorithm $\cF$ on $G'$.

Consider $G$ and the partition $\cP$.
Since each region has $\lceil 3/l \rceil $ blocks, where $l = \max\{\gamma, (1-s)\}$ and since each block has a constant number of spokesmen, the broadcast algorithm sequentially makes all spokesmen of a region communicate every $\Theta(1/l)$ rounds.
From Lemma~\ref{l:noCollide}, the process is collision-free.
From Lemma~\ref{c:rangeOverlap}, the spokesmen of a block reach all the nodes that can be reached by any node on their block that do know the message $m$.
It then follows that the message $m$ being relayed through the network may be slowed down by a factor $O(1/l)$ with respect to the execution of Algorithm $\cF$ in $G'$.
Hence, for any network $G$ of diameter $D$, the total transmission time is in $O(D/l)$.
\end{proof}

\begin{algorithm}{}
{\bf Algorithm $\cB$}
\begin{algorithmic}
\STATE {\bf In parallel} for all nodes
\STATE {\bf Call} Procedure $\cD^*$
\STATE {\bf Call} Procedure $\cT$
\end{algorithmic}
\end{algorithm}

\begin{Proof}{ of Theorem~\ref{t:algoB}}
From Lemma \ref{l:ProcT}, the time of execution of Procedure $\cT$ is $O(D/l)$.
From Lemma \ref{l:procD*}, the time of execution of Procedure $\cD^*$ is $\Theta(g^2)$.
Adding these times together, we get a total time of $O(D/l + g^2)$.
\end{Proof}

\section{Two-Dimensional Lattice}
\label{s:Lattice}
In Section \ref{s:LL}, we have shown an optimal time broadcast algorithm for the lattice line.
We now extend this result to multi-dimensional lattices.
Hence, we consider the set $V$ of $n$ nodes placed at Euclidean coordinates $(i,j)$ for $i=0,1,\ldots,\sqrt{n}-1$ and $j=0,1,\ldots,\sqrt{n}-1$.
We call this placement of the nodes the {\em two-dimensional lattice}.

Consider that each node has a communication range $r$ and a swamping range $s$ such that $r - s > 1$.
Throughout this section, we assume that $r$ and $s$ are positive integers.
We call {\em transmission annulus of $u$} the region at distance greater than $s$ and at most $r$ from a node $u$ and denote it by $A_u$.
Each node $u$ shares a link with each node $v$ located within $A_u$.
The set of links $E$ is the union of all these shared links.
We will present Algorithm $\cA^2$, an extension of Algorithm $\cA$, to broadcast a message in this two-dimensional lattice.
In this section we will prove the following result:

\begin{theorem}
\label{t:A2}
Algorithm $\cA^2$ broadcasts in time $4\lfloor \sqrt{n}/r \rfloor + 12(\lceil r / (r-(s+1)) \rceil + 1)$.
This order of magnitude is optimal.
\end{theorem}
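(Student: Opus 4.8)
The plan is to reduce broadcasting on the two-dimensional lattice to a constant number of essentially one-dimensional broadcasts, each an instance (or close variant) of Algorithm $\cA$ from Theorem~\ref{t:algoA}, and then to show a matching lower bound. First I would run $\cA$ along the single row containing the source. Because this uses only nodes of that row, the vertical swamping it induces on the idle rows is harmless, and Theorem~\ref{t:algoA} gives that the entire source row is informed in $\lfloor \sqrt n/r\rfloor + 3(\lceil r/(r-(s+1))\rceil+1)$ rounds. The reason to inform the whole row (rather than only spread sparsely) is that this seeds \emph{every} column with an informed node, which is exactly what lets the subsequent vertical phase run simultaneously in all columns.

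Next I would propagate vertically. The key structural observation is that the \emph{sparse} vertical spread can be run in all columns at once without collisions: when at round $t$ the informed node at height $y_0+(t-1)r$ of a column transmits to the node $r$ above it, every foreign transmitter lies in a different column, hence at Euclidean distance $\sqrt{r^2+\Delta^2}>r$ from that receiver (where $\Delta\ge 1$ is the horizontal separation), so it is neither a neighbour nor a swamper. This is a two-dimensional analogue of the $4r$-separation argument used in the proof of Theorem~\ref{t:algoA}, and it shows that after this phase every horizontal line $y=y_0+jr$ is fully informed.

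It remains to fill the gap rows strictly between consecutive informed lines, and this is where the \textbf{main obstacle} lies. A node at height $h\in(0,r)$ above an informed line can be reached from that line, but only from a transmitter whose horizontal offset $\delta$ satisfies $\sqrt{\delta^2+h^2}\in(s,r]$; when $h\le s$ this forces $\delta>\sqrt{s^2-h^2}$, which is precisely the swamping obstruction and turns each gap row into a one-dimensional $Local$-type dissemination (cf.\ Lemma~\ref{r:goodAB}) carried out \emph{horizontally} along the informed line below it (and above it). I would schedule these fills over a constant number of phases by colouring the informed lines so that simultaneously active lines sit at vertical distance at least $2r$; then, as in the collision analysis of Theorem~\ref{t:algoA}, no receiver ever sees a second neighbour or an in-range swamper from another active line or column. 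The delicate point is that, unlike in Section~\ref{s:LL}, one must verify this across all active rows \emph{and} columns at once, i.e.\ a genuinely two-dimensional separation argument rather than a single-axis check. Bookkeeping the horizontal sweep, the vertical sparse sweep, and the colour classes of horizontal local fill (each carrying its two $Local$ batches plus a termination step) is what yields the explicit $4\lfloor \sqrt n/r\rfloor + 12(\lceil r/(r-(s+1))\rceil+1)$, and realizes Algorithm $\cA^2$.

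Finally, for optimality I would establish the matching lower bound $\Omega(\sqrt n/r+r/(r-(s+1)))$. The term $\sqrt n/r$ is a hop bound: opposite corners of the lattice are at graph distance $\Theta(\sqrt n/r)$ and each round advances any message by at most one hop. For the term $r/(r-(s+1))$ I would reuse the counting argument of Theorem~\ref{t:algoA} restricted to a single line of the lattice: avoiding collisions, a transmitter informs new nodes only inside its annulus, so within a length-$2r$ window at most $2(r-(s+1))$ previously uninformed collinear nodes can be added per round, forcing $\Omega(r/(r-(s+1)))$ rounds to saturate such a window. Adding the two bounds gives the stated order of magnitude, which the algorithm attains up to the constants $4$ and $12$.
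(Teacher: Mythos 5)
Your upper-bound construction is a genuinely different decomposition from the paper's, and most of it is sound: the paper first proves a ``thick line'' lemma (Lemma~\ref{c:thickLine}, via the law of cosines) showing that every transmitter's annulus cuts each parallel lattice line within distance $d=\lfloor\sqrt{3}r/2\rfloor$ in segments of length at least $r-s$, so that one run of $\cA$ informs an entire strip of width $2d$; Algorithm $\cA^2$ is then just four runs of $\cA$ --- one horizontal, then vertical lines at spacing $2d$ split into three groups at spacing $6d$, with $6\lfloor\sqrt{3}r/2\rfloor>2r$ (for $r\ge 2$) giving collision-freeness. Your all-columns parallel sparse relay is correct (the $\sqrt{r^2+\Delta^2}>r$ computation does rule out both collisions and swamping at the intended receivers), but your gap-row filling step silently relies on exactly the geometric content of Lemma~\ref{c:thickLine} without proving it: Lemma~\ref{r:goodAB}, which you cite, only certifies coverage of receivers \emph{on the transmitters' own line}. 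For a gap row at vertical distance $h$ you must show each transmitter covers segments of length $\sqrt{r^2-h^2}-\sqrt{\max(s^2-h^2,0)}\;\ge\;\frac{r^2-s^2}{r+s}=r-s$, and, since the guaranteed coverage of each $Local_k$ window shrinks by $r-\sqrt{r^2-h^2}$ at its ends on an off-line row, that the staggered batches still leave no uncovered sliver at window boundaries. This is fixable for your spacing (every gap row is within $r/2<\sqrt{3}r/2$ of an informed line), but it is a missing lemma rather than bookkeeping; relatedly, the constants $4$ and $12$ are asserted, not derived.

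The genuine gap is in your optimality argument. Your per-round cap --- at most $2(r-(s+1))$ newly informed collinear nodes per length-$2r$ window --- is a one-dimensional fact that fails in the plane: a single transmitter at vertical distance $h$ slightly greater than $s$ from the line reaches a \emph{contiguous} collinear segment of length $2\sqrt{r^2-h^2}\approx 2\sqrt{(r-s)(r+s)}$, which for $s$ close to $r$ exceeds $2(r-(s+1))$ by a factor $\Theta\bigl(\sqrt{(r+s)/(r-s)}\bigr)$ (e.g.\ $r=100$, $s=98$: about $40$ nodes versus $2$). Consequently your window argument yields only $\Omega\bigl(\sqrt{r/(r-s)}\bigr)$ for the second term, which does not match the algorithm's $r/(r-(s+1))$. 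The paper avoids this by counting two-dimensionally: in any round at most $\pi(r^2-s^2)<2\pi r(r-s)$ nodes inside a diameter-$r$ disk can receive, while such a disk contains $\Theta(r^2)$ lattice nodes, forcing $\Omega(r/(r-s))$ rounds; combined with the trivial $\Omega(\sqrt{n}/r)$ hop bound (which you have correctly), this gives the matching lower bound. You need to replace your collinear counting with an area-based (or otherwise genuinely two-dimensional) count for the theorem's optimality claim to stand.
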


In order to present Algorithm $\cA^2$ and prove the main theorem of this section, we need a preparatory lemma.
Fix one row $l$ of nodes in the square lattice and consider the region covered by all the transmission annuli in the execution of algorithm $\cA$ on this line.

\begin{lemma}
\label{c:thickLine}
Algorithm $\cA$ broadcasts the message to all nodes on $l$ and to all nodes within distance $\lfloor \sqrt{3}r/2 \rfloor$ from $l$.
\end{lemma}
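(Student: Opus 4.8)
The plan is to reduce the claim to a purely geometric covering statement and then establish that covering using the two kinds of transmitters that $\cA$ activates. First I would argue that, for a node $w$ lying off the row $l$, \emph{receiving} is equivalent to \emph{being covered}. In $\cA$ any two nodes that transmit in the same round have horizontal labels differing by more than $2r$: within each parallel phase the $Local$ schemes are started $4r$ apart and each scheme fires a single node per round, while the sparse phase and the termination phase fire one node at a time. Consequently no point within Euclidean distance $r$ of one active transmitter can be within distance $r$ of a second active transmitter, so no off-row node ever suffers a collision; moreover such a point is at horizontal distance exceeding $2r-r=r$, hence Euclidean distance exceeding $r>s$, from every other simultaneous transmitter, so it is never swamped. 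Thus a node $w$ receives $m$ as soon as, in some round, it lies in the transmission annulus (distance in $(s,r]$) of a single active transmitter, and it suffices to prove that every lattice node within vertical distance $\lfloor\sqrt{3}\,r/2\rfloor$ of $l$ lies in the annulus of some transmitter activated by $\cA$.

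For the covering, fix $w=(x,\,l\pm h)$ with $1\le h\le\lfloor\sqrt{3}\,r/2\rfloor$ (the case $h=0$ is Theorem~\ref{t:algoA}, and the two sides of $l$ are symmetric). The heart of the bound is the \emph{outer} radius: the sparse phase activates transmitters at spacing $r$, so some sparse transmitter $k^{*}$ lies within horizontal distance $r/2$ of $x$, and then $\mathrm{dist}(w,k^{*})\le\sqrt{(r/2)^2+(\sqrt{3}\,r/2)^2}=r$. This is exactly where the threshold $\sqrt{3}\,r/2$ enters and why it is the correct value: a point at horizontal offset $r/2$ (midway between two consecutive sparse transmitters) escapes radius $r$ precisely once $h>\sqrt{3}\,r/2$. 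Whenever $\mathrm{dist}(w,k^{*})>s$, that is whenever $w$ lies outside the swamping disk of $k^{*}$, the node $w$ is reached by $k^{*}$ and we are done.

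The remaining, and main, obstacle is the \emph{inner} radius: $w$ may sit in the swamping disk of $k^{*}$, i.e. $\mathrm{dist}(w,k^{*})\le s$ (which forces $h\le s$ and $|x-k^{*}|<\sqrt{s^2-h^2}$). Here $k^{*}$ cannot deliver to $w$, and I would instead invoke the denser $Local$ transmitters, which by Lemma~\ref{cl:possible} occupy an arithmetic progression of step $r-(s+1)$ tiling the line (the positions $a_{i,k}$ and $b_{i,k}$). I must produce a transmitter at horizontal distance $d$ from $x$ with $\sqrt{s^2-h^2}<d\le\sqrt{r^2-h^2}$; the admissible band has width $\sqrt{r^2-h^2}-\sqrt{s^2-h^2}$, which is increasing in $h$ and hence at least its value $r-s$ at $h=0$. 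Since $r-s\ge r-(s+1)$, the band is no narrower than the transmitter spacing, so at least one $Local$ transmitter falls inside it, and the upper bound $\mathrm{dist}(w,\cdot)\le r$ again holds because $h\le\sqrt{3}\,r/2$. I expect the only delicate points to be in this last step: checking that the progression of $Local$ transmitters actually extends on a side of $x$ on which the symmetric band $|x-k|\in(\sqrt{s^2-h^2},\sqrt{r^2-h^2}]$ is met, and handling the leftmost and rightmost partial regions, where I would rely on the termination scheme $Local_{n-2r}$ and on the symmetric left-of-source sweep to exclude boundary gaps. Assembling the three steps shows that every lattice node within vertical distance $\lfloor\sqrt{3}\,r/2\rfloor$ of $l$ is reached, which proves the lemma.
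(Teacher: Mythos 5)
Your proof is correct, but it inverts the paper's viewpoint, so the two arguments are organized quite differently around the same geometric quantity. The paper argues transmitter-centrically: it fixes a transmitter $u$ on $l$ and computes the length of the intersection of its annulus with a parallel line $l_d$ --- two segments each of length at least $r-s$ when $d \le s$ (via the law of cosines, $L^2 = r^2+s^2-2rs\cos\theta \ge (r-s)^2$), and a single segment of length $2\sqrt{r^2-d^2} \ge r-s$ when $s < d \le \sqrt{3}\,r/2$ --- and then inherits coverage of $l_d$ wholesale from the one-dimensional covering pattern of Lemma~\ref{cl:possible}, since transmitters are spaced $r-(s+1) < r-s$ apart. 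You argue receiver-centrically: fix $w$ at height $h$, split on whether the nearest sparse transmitter swamps $w$, serve the unswamped case with the $r$-spaced sparse grid (which is exactly where $\sqrt{3}\,r/2$ is tight; the paper's computation in fact yields the slightly larger threshold $\sqrt{3r^2/4+rs/2-s^2/4}$), and serve the swamped case, which forces $h \le s$, by pigeonholing a $Local$ transmitter into the band $\bigl(\sqrt{s^2-h^2},\sqrt{r^2-h^2}\,\bigr]$, whose width you bound below by $r-s$ via monotonicity in $h$ --- the same quantity the paper bounds by the law of cosines, since your band is precisely the one-sided chord of the annulus on $l_d$. Your route buys something the paper leaves implicit: an explicit check that off-row points suffer neither collision nor swamping from the $4r$-separated parallel schemes (the paper's collision argument, stated in the proof of Theorem~\ref{t:algoA}, concerns only nodes on the line itself), and a clean explanation of why both transmitter populations are needed. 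Two small repairs to your write-up: the union of the positions $a_{i,k}$ and $b_{i,k}$ is not literally a single arithmetic progression of step $r-(s+1)$ unless $r-(s+1)$ divides $r$; what is true, and all your pigeonhole needs, is that consecutive transmitter positions (including across the $a$/$b$ junction and across consecutive $2r$-blocks) are at most $r-(s+1)$ apart, since the largest $a$-position in a scheme is within $r-(s+1)$ of $k+r$. And the leftmost/rightmost boundary cases you flag are genuinely delicate but are equally ignored by the paper's own proof, so your argument is not weaker there.
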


\begin{proof}
In Algorithm $\cA$, nodes broadcast a message along a line.
To do so, nodes at distance at most $r-(s+1)$ from one to the next transmit the message.
The algorithm is successful because each node sends the message symmetrically to intervals of length $r - s$, resulting in complete coverage of the line by the set of transmitting nodes.
It follows that, if each transmitter on a line also covers a length $r-s$ on a parallel lattice line then, complete coverage of this line would be achieved by the end of this Algorithm $\cA$.
For a line $l_d$ parallel to $l$ and at distance $d$ from $l$ and for a fixed node $u$ on the line $l$, let the segment $ls_d$ be a line segment resulting from the intersection of the communication annulus of $u$ and the line $l_d$.

We now evaluate the length of $ls_d$.
Let $L$ denote the length of some line segment $ls_d$.
For $d \leq s$, there are $2$ line segments on each lattice line.
By the law of cosines, we have that the length $L$ is
$$
L^2 = r^2 + s^2 - 2rs\cos(\theta) \geq r^2 - 2rs + s^2 = (r-s)^2.
$$
Hence, for all line segments with one endpoint at distance $s$ from $u$ and another at distance $r$ we have that $L \geq r-s$ for all segments $ls_d$.
See Figure~\ref{fig:range_overlap}.
\begin{figure}[!htb]
\centering
\includegraphics[scale=0.75, bb =0 0 416 96,angle=0]{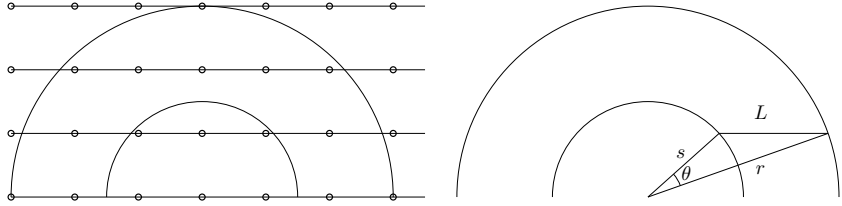}
\caption[Coverage of transmissions]{Coverage of transmissions: the intersection of the communication annulus with lines parallel to $l$ defines the length of the line segments included in the communication annulus.}
\label{fig:range_overlap}
\end{figure}

On the other hand, for $s < d \leq r$, there is a single line segment on each lattice line with length $l_d = 2\sqrt{r^2 - d^2}$.
In this case, for $d \leq \sqrt{3r^2/4 + rs/2-s^2/4}$ we have
$$
l_d \geq 2\sqrt{r^2 - (3r^2/4 + rs/2-s^2/4)} = 2\sqrt{r^2/4 - rs/2 + s^2/4} = 2(r-s)/2.
$$
Since $r>s$, this is also true if $d \leq \sqrt{3}r/2$.

Hence, Algorithm $\cA$ completes message dissemination on all lines within distance $\lfloor \sqrt{3}r/2 \rfloor$ from $l$.
\end{proof}

Building on Algorithm $\cA$ and on the fact that it broadcasts the message $m$ to all nodes on a line $l$ and to all nodes within distance $d = \lfloor\sqrt{3}r/2\rfloor$ from $l$,
Algorithm $\cA^2$ operates in $4$ phases:
\begin{enumerate}
\item Execution of Algorithm $\cA$ on the horizontal line $l$ of the source node.
\item Execution of Algorithm $\cA$ on the vertical lines at coordinates $d, 7d, 13d, \ldots$
\item Execution of Algorithm $\cA$ on the vertical lines at coordinates $3d, 9d, 15d, \ldots$
\item Execution of Algorithm $\cA$ on the vertical lines at coordinates $5d, 11d, 17d, \ldots$
\end{enumerate}

As it was the case for Algorithm $\cA$, algorithm $\cA^2$ is valid for any source node.
We now prove the main theorem of this section.

\begin{Proof}{ of Theorem~\ref{t:A2}}
By Lemma~\ref{c:thickLine}, and considering the set of lines at distance $2d = 2\lfloor\sqrt{3}r/2\rfloor$, Algorithm $\cA^2$ will achieve complete coverage of the lattice within its execution, given that no collision occurs.

We now demonstrate that no collision occurs from simultaneous transmissions.
To show this, we show that any two nodes transmitting in one round are at distance at least $2r$ from one-another.
Any simultaneous transmission occurs from nodes at distance $6 \lfloor\sqrt{3}r/2\rfloor$ from one-another.
We have that the distance between any two nodes transmitting in the same round is
\begin{eqnarray*}
6 \lfloor\sqrt{3}r/2\rfloor &>& 3\sqrt{3}r - 6 > 5.19r - 6\\
    &=& 2r + (3.19r - 6) > 2r \mbox{ for $r \geq 2$.}
\end{eqnarray*}
An implication of Lemma~\ref{l:impossible} is that broadcasting in lattice networks with swamping is impossible for $r < 2$.
Hence the assumption that $r \geq 2$ is true in all cases when broadcast is possible.

Since Algorithm $\cA^2$ is a sequence of $4$ executions of Algorithm $\cA$, it runs in the time of $4$ execution of algorithm $\cA$ on lines of length $\sqrt{n}$.
Hence, the execution time of Algorithm $\cA^2$ is $4\lfloor \sqrt{n}/r \rfloor + 12(\lceil r / (r-(s+1)) \rceil + 1)$.

We now prove that our algorithm is of optimal time complexity.
Nodes within the swamping radius of a transmitting node cannot receive any message;
the maximum number of nodes which can receive a message in one round within the communication radius of a node is then the nodes within its annulus.
Consider a set of nodes $N$, at distance $L$ from the source node, where $L$ is some multiple of $r$.
Consider further that this set $N$ is the intersection of a disk of diameter $r$ and the square lattice.
At least $L/r$ rounds are needed for the message to reach the set $N$.
At the following round, broadcasting within $N$ may begin.
From the communication model, at most $\pi(r^2 - s^2)$ nodes inside $N$ may receive the message within each round.
Hence, because $2\pi r(r-s) > \pi(r^2-s^2)$, less than $2 \pi r (r - s)$ nodes receive the message in each round.
Therefore, the total broadcasting time is at least
$$
\left\lceil \frac{L}{r} \right\rceil + \left\lceil \frac{\pi r^2}{2\pi r(r - s)} \right\rceil = \left\lceil \frac{L}{r} \right\rceil + \left\lceil \frac{r}{2(r - s)} \right\rceil.
$$
For communication in the $n$-node square lattice, we obtain a lower bound on the broadcasting time which is
$$
\left\lceil \frac{\sqrt{n}}{2r} \right\rceil + \left\lceil \frac{r}{2(r - s)} \right\rceil \in \Omega\left( \frac{\sqrt{n}}{2r} + \frac{r}{2(r - s)}\right),
$$
for $r > 1$ and $s$ integers and for $r - s > 1$.
This lower bound matches the time complexity of Algorithm $\cA^2$.
\end{Proof}

By the same technique used to extend Algorithm $\cA$ to Algorithm $\cA^2$, we may extend the Algorithm $\cA$ to an algorithm $\cA^d$, broadcasting in the $d$-dimensional lattice, when $d\in \Theta(1)$.
By a proof similar to that of Theorem~\ref{t:A2}, we may show the following result.

\begin{lemma}
Algorithm $\cA^d$ broadcasts in the $d$-dimensional lattice, $d\in \Theta(1)$, with time complexity in $\Theta(d \sqrt[d]{n} + d r/(r-(s+1)))$.
\end{lemma}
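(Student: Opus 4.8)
The plan is to argue by induction on the dimension $d$, lifting the passage from Algorithm $\cA$ to Algorithm $\cA^2$ in the proof of Theorem~\ref{t:A2} one dimension at a time. Write $d' = \lfloor\sqrt{3}\,r/2\rfloor$. The base case $d=1$ is Theorem~\ref{t:algoA}, since $\cA^1=\cA$ broadcasts on a line of $n = \sqrt[1]{n}$ nodes in time $\Theta(n/r + r/(r-(s+1)))$. For the inductive step I would define $\cA^d$ to process the axes in turn: having (inductively) flooded a full-dimensional box in the first $k$ coordinates together with a thin band around the source in the remaining coordinates, I run $\cA$ on lines parallel to axis $k+1$, each seeded from a node already informed in that band, scheduled in a constant number of collision-free batches. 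By Lemma~\ref{c:thickLine} each such line execution informs every node within perpendicular distance $d'$ of it (the law-of-cosines cross-section bound there depends only on the perpendicular distance, not its direction, so it applies to tubes of radius $d'$ in $d$-space). After $d$ such phases every coordinate has been made full, so every node is reached.

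The first thing to pin down is coverage, i.e.\ that the union of the radius-$d'$ tubes produced in each phase leaves no gaps and that the invariant "full in the processed coordinates, thin band around the source in the rest" is genuinely restored after each phase. Granting the covering claim, the seeding is immediate: a line parallel to axis $k+1$ meets the previously informed region exactly when its fixed coordinates lie in the thin band, and running $\cA$ on it then spans the whole extent of axis $k+1$ while thickening by $d'$ in all remaining directions, which is what the next invariant requires.

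The second thing is collision-freeness and timing, handled as in Theorem~\ref{t:A2}. Two simultaneous transmitters lie either on a common line execution of $\cA$, where they are at least $2r$ apart by the correctness of $\cA$, or on two lines of the same batch, which I would space so that their separation is at least $6d' = 6\lfloor\sqrt{3}\,r/2\rfloor > 2r$ (valid for $r\ge 2$, which holds whenever broadcast is possible by Lemma~\ref{l:impossible}); in either case no node lies within distance $r$ of both, so no collision occurs. Each phase consists of $O(1)$ executions of $\cA$ on lines of length $\sqrt[d]{n}$ (the common side length of the lattice), costing $\Theta(\sqrt[d]{n}/r + r/(r-(s+1)))$, and there are $\Theta(d)$ phases, giving the asserted $\Theta\!\big(d\,\sqrt[d]{n}/r + d\,r/(r-(s+1))\big)$. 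Because $d\in\Theta(1)$, the number of batches per phase and all the $d$-dependent constants below remain bounded and are absorbed into the $\Theta$.

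The step I expect to be the main obstacle is precisely the $d$-dimensional covering geometry, which is where the two-dimensional argument does \emph{not} generalize verbatim. In the plane the space perpendicular to a line is one-dimensional, so radius-$d'$ tubes spaced $2d'$ apart tile a slab exactly; once the perpendicular space is at least two-dimensional this fails, since a point at perpendicular distance $d'$ above a mid-gap location sits at distance $\sqrt{2}\,d' > d'$ from every nearby line and is missed. Consequently the coverage budget of Lemma~\ref{c:thickLine} must be apportioned across the (up to $d$) perpendicular directions: I would shrink the inter-line spacing and the band thickness by constant factors depending on $d$ so that the combined perpendicular offset never exceeds $d'$, and then re-verify both the tiling of the informed region and the $\ge 2r$ separation for the collision schedule on this refined grid. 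This refinement only inflates constants, and the argument goes through precisely because $d$ is taken to be constant; making the tiling and seeding interfaces fully rigorous under the refinement is where the real care lies, whereas the collision and timing bookkeeping are then routine adaptations of the $d=2$ case.
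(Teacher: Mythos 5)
Your proposal is correct and follows the route the paper intends: the paper offers no written proof of this lemma at all, only the remark that it follows ``by a proof similar to that of Theorem~\ref{t:A2},'' i.e., by iterating the technique that lifts $\cA$ to $\cA^2$ one axis at a time, which is exactly your induction. Two points of comparison are worth recording. First, your observation that the two-dimensional argument does not generalize verbatim is genuine, and it is precisely what the paper's one-line justification glosses over: for $d\geq 3$ the space perpendicular to a line is at least two-dimensional, so tubes of radius $d'=\lfloor\sqrt{3}r/2\rfloor$ around lines spaced $2d'$ per axis leave the deep holes of the perpendicular grid (at distance up to $d'\sqrt{d-1}$ from every line) uninformed. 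Your repair --- shrink the inter-line spacing and the band thickness by constants depending on $d$ so that the combined perpendicular offset never exceeds $d'$, then re-partition the lines of a phase into $m^{d-1}$ batches with intra-batch separation above $2r$ --- is sound, and the resulting batch count is absorbed only because $d\in\Theta(1)$, which you correctly flag as the load-bearing hypothesis; your transport of Lemma~\ref{c:thickLine} to $d$ dimensions via the rotational symmetry of the annulus about the line is also right, and since the lemma asserts only the running time of a fixed schedule (optimality is claimed only in Theorem~\ref{t:A2}), no separate lower-bound argument is owed. Second, your final bound $\Theta\left(d\sqrt[d]{n}/r + dr/(r-(s+1))\right)$ carries the factor $1/r$ on the first term, whereas the lemma as stated omits it; your version is the one consistent with the $d=2$ case ($4\lfloor\sqrt{n}/r\rfloor + 12(\lceil r/(r-(s+1))\rceil+1)$ in Theorem~\ref{t:A2}), so the discrepancy is a typo in the lemma statement rather than an error in your accounting.
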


\section{City Model}
\label{s:City}
We now consider the task of broadcasting in a connected network of unknown topology.
In particular, we consider networks with nodes placed at points on the plane, located at least at some geometric distance $\gamma$ from each-other.
Each node $u$ is equipped to communicate with all nodes that are both within distance $1$ and at distance greater than $s$ from it.
Hence, in this section, we assume that $r=1$ for simplicity.
More formally, we describe the {\em city model}.
The communication range of a node $u$ is the annulus centered at $u$ with radii $s$ and $1$.
The {\em size} of the communication range is the width of this annulus,
i.e., $1-s$.
The adversary designs the network such that it is {\em connected} and the distance between any pair of nodes $u,v$ is at least $\gamma$.
We say that a network is connected if, for any node pair $u,v$, there exists a path in the network from node $u$ to node $v$.
Observe that the network is connected only if $\gamma \leq 1$.

Nodes are aware of the parameter $\gamma$ (and $g=1/\gamma$) and the coordinate system of the plane.
Each node also knows the parameter $s$, its swamping distance, and its communication distance $1$.

We wish to complete broadcasting in a collision avoidance scheme.
We will use the assumption of spontaneous wake-up of the nodes.
In this section we use the apparent silence from collisions to discover the presence of nodes through a collision-causing process, used before the transmission part of the broadcasting algorithm.
Moreover, we will assume that nodes know about the transmissions made within close proximity.
We will show the following result.

\begin{theorem}
\label{t:algoB2}
Algorithm $\cB^2$ broadcasts a message $m$ in a network of diameter $D$ in time $O(Dg/l + g^4)$, where $l = \max\{(1-s)/(3\sqrt{2}),\gamma/\sqrt{2}\}$.
\end{theorem}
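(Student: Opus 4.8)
The plan is to mirror, in two dimensions, the proof of Theorem~\ref{t:algoB}: I would present Algorithm $\cB^2$ as the concatenation of a neighbourhood-discovery phase and a message-transmission phase, bound the two phases separately by $\Theta(g^4)$ and $O(Dg/l)$ respectively, and then add the bounds. First I would build a planar partition $\cP^2$ analogous to $\cP$, cutting the plane into square \emph{regions} of side $3$, each region into square \emph{blocks} of side $l$, and each block into square \emph{homes} of side $\gamma/\sqrt{2}$. The factor $\sqrt{2}$ is chosen so that a home has diameter $\gamma$; hence, as in Lemma~\ref{l:singleHome}, at most one node lies in each home. The choice $l=\max\{(1-s)/(3\sqrt{2}),\gamma/\sqrt{2}\}$ is dictated by the two-dimensional coverage requirement discussed below, just as $l=\max\{(1-s),\gamma\}$ was in the highway model.

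Next I would establish the planar counterparts of the partition properties. A non-collision lemma like Lemma~\ref{l:noCollide} follows because identically labelled blocks in distinct regions are separated by at least $3-l\geq 2$ (note $l<1$ here), so their transmissions cannot reach a common point; this lets every region run the schedule in parallel. For the discovery phase I would count $\Theta(g^2)$ homes per region (a region of area $9$ tiled by homes of area $\Theta(\gamma^2)$), so that the analogue of Procedure~$\cD$ costs $\Theta(g^2)$ rounds. The collision-based subroutine (the analogue of Procedure~$\cD_{(b,h)}$) again costs $\Theta(g^2)$, and running it over all $\Theta(g^2)$ target labels $(b,h)$ yields a discovery procedure of cost $\Theta((g^2)^2)=\Theta(g^4)$, exactly as Lemma~\ref{l:procD*} gave $\Theta(g^2)$ in one dimension. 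Correctness is inherited from Fact~\ref{f:connect} together with the single-node-per-home property, identically to the one-dimensional argument.

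For the transmission phase I would compare, as in Lemma~\ref{l:ProcT}, the execution of the planar analogue of Procedure~$\cT$ on $G$ with the execution of the flooding algorithm $\cF$ on the collision-free graph $G'$, which terminates in $\Theta(D)$ rounds. The key is a two-dimensional range-overlap statement generalising Lemma~\ref{c:rangeOverlap}: a bounded family of \emph{spokesmen} on a block must collectively reach every node reachable from any informed node of that block. Because the useful neighbourhood is now an annulus of width $1-s$ rather than an interval, I expect to need one spokesman per boundary home along each relevant direction, i.e.\ $\Theta(l/\gamma)=\Theta(lg)$ spokesmen per block; with $\Theta(1/l^2)$ blocks per region this gives $\Theta(g/l)$ spokesmen, hence $O(g/l)$ rounds, to advance the message one hop. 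Invoking the planar non-collision lemma for parallelism across regions, and the uniqueness argument generalising Lemma~\ref{c:uniqueSpokesmen}, the slowdown relative to $\cF$ is $O(g/l)$, so the phase finishes in $O(Dg/l)$. Adding the two phase bounds yields the claimed $O(Dg/l+g^4)$.

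The hard part will be the transmission/coverage argument. In one dimension a constant number of spokesmen per block sufficed because a node's functional range is a single interval and consecutive ranges overlap as soon as the gap is at most $1-s$; in two dimensions one must cover an annulus, show that the correct bounded set of boundary, range, and potential spokesmen is selected consistently (which is where the city-model assumption that a node knows the transmissions of nearby nodes is used), and verify that $\Theta(lg)$ of them per block both suffice for coverage and avoid mutual collision. Pinning down the constant in $l$, in particular the $(1-s)/(3\sqrt{2})$ term, and proving that these spokesmen tile the annular coverage without gaps is the crux; the $\Theta(g^4)$ discovery bound and the additive combination are then routine extensions of the highway-model proofs.
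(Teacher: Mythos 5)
Your overall architecture matches the paper's exactly: the same partition $\cP^2$ (regions of side $3$, blocks of side $l$, homes of side $\gamma/\sqrt{2}$ so that Lemma~\ref{l:singleHome} and Lemma~\ref{l:noCollide} carry over), the same $\Theta(g^2)$-round procedures $\cD$ and $\cD_{(b,h)}$ composed into a $\Theta(g^4)$ discovery phase (Lemmas~\ref{l:procDbh2D} and~\ref{l:procD*2D}), the same count of $O(lg)$ spokesmen per block and $\Theta(1/l^2)$ blocks per region yielding an $O(g/l)$ slowdown against flooding on the collision-free graph $G'$ (Lemma~\ref{l:ProcT2}), and the same additive combination in the final proof. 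Where your proposal is concrete, the counts and bounds are all correct.

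But there is a genuine gap precisely at the step you defer as ``the crux'': the two-dimensional coverage lemma (the paper's Lemma~\ref{l:goodSpokesman}), without which the $O(Dg/l)$ transmission bound has no correctness proof. Worse, the route you gesture at --- transplanting the one-dimensional boundary/range/potential spokesman taxonomy and ``tiling the annular coverage without gaps'' as a direct generalization of Lemma~\ref{c:rangeOverlap} and Lemma~\ref{c:uniqueSpokesmen} --- is not how the argument can readily be made to work: in the plane the informed set of a block is not an interval, and the union of communication annuli of its nodes has no clean nesting or overlap structure to exploit. The paper instead selects, among the nodes of a block that \emph{already possess} $m$ (this is where the city-model knowledge assumption enters, dynamically rather than through covered ranges), the two extreme informed nodes of every row, column, and diagonal of homes, and proves a proximity \emph{sandwich}: a non-spokesman informed node $u$ is surrounded by eight spokesmen cutting the plane into sectors of angle less than $\pi/2$, so that for every point $p$ some spokesman is closer to $p$ than $u$ (Lemma~\ref{l:closer}, via Fact~\ref{f:halfplane}) and some spokesman is farther (Lemma~\ref{l:farther}). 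Then, because the block diameter $l\sqrt{2}$ is $(1-s)/3 < (1-s)/2$ --- less than half the width of the communication annulus --- for any neighbor $v$ of $u$ with $s < dist(u,v) \leq 1$, at least one of the closer/farther spokesmen lies inside $v$'s annulus $(s,1]$. This sandwich argument is the missing idea, and it also answers the question you explicitly leave open: the constant $(1-s)/(3\sqrt{2})$ in $l$ is chosen exactly so that the block diameter is a third of the annulus width, leaving the needed slack $\delta < (1-s)/2$.
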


\subsection{Partition $\cP^2$ of the Plane}
We now define a partition, called $\cP^2$, on which our communication algorithm will operate.

Each square in the partition below includes its North border, its West border, and both its North vertices;
it excludes its East border, its South border and both its South vertices.
We provide a graphical representation of the partition in Figure \ref{fig:partition2D} and now describe it below.
\begin{figure}[!htb]
\centering
\includegraphics[scale=0.5, bb =0 0 552 192,angle=0]{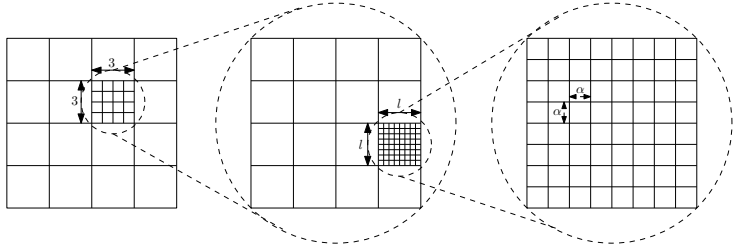}
\caption[Partition $\cP^2$]{Partition $\cP^2$: from left to right, the plane is partitioned into $3\times 3$ squares called regions;
for $l = \max\{(1-s)/(3\sqrt{2}),\gamma/\sqrt{2}\}$ regions are partitioned into $l \times l$ squares called blocks;
blocks are partitioned into $\gamma/\sqrt{2} \times \gamma/\sqrt{2}$ squares called homes.}
\label{fig:partition2D}
\end{figure}

Partition the plane into a mesh of $3 \times 3$ squares called {\em regions}.

Further partition each region into a mesh of $l \times l$ squares, called {\em blocks}, with length $l = \max\{(1-s)/(3\sqrt{2}),\gamma/\sqrt{2}\}$.
Here, $l \leq 1/\sqrt{2}$ since both $\gamma \leq 1$ and $(1-s)/3 \leq 1$.
Each region contains $\mu = \lceil 3/l \rceil^2$ blocks, where $\lfloor 3/l \rfloor^2$ are of area $l^2$ and at most $2 \lfloor 3/l \rfloor + 1$ are smaller, and even may consist of a single line or point.
For each region, label blocks $1, 2, \ldots, \mu$, from West-East row by row, North to South.

Partition also each block into a mesh of $\gamma/\sqrt{2} \times \gamma/\sqrt{2}$ squares, called {\em homes}.
Each block contains $\nu = \lceil \sqrt{2}l/\gamma \rceil^2$ homes, where $\lfloor \sqrt{2}l/\gamma \rfloor^2$ are of area $\gamma^2/2$ and at most $2\lfloor \sqrt{2}l/\gamma \rfloor + 1$ are smaller, and even may consist of a single line or point.
For each block, label homes sequentially $1, 2, \ldots, \nu$, from West-East row by row, North to South.

\subsubsection{Partition Properties}
In section~\ref{s:Highway}, we showed properties for the partition $\cP$.
We now show the validity of lemmas \ref{l:singleHome} and \ref{l:noCollide} for the partition $\cP^2$.
For ease of reading, we now repeat these lemmas.

\begin{relemma}{\ref{l:singleHome}}
Each home contains at most one node.
\end{relemma}

Observe that since homes have diameter at most $\gamma$, at most one node can occupy each home.
Hence, Lemma~\ref{l:singleHome} holds for partition $\cP^2$.

\begin{relemma}{\ref{l:noCollide}}
Transmissions from unique nodes inside identically labeled blocks in distinct regions do not collide.
\end{relemma}

Consider nodes $u,v$ in different regions and identically labeled blocks.
Since each region has side length $3$ and each block has side length $l \leq 1$, Lemma~\ref{l:noCollide} holds for $\cP^2$.

In the following sections, we describe communication procedures that will enable nodes to broadcast messages to all nodes of their networks.

\subsection{Procedure $\cD$ for Nodes in Range}
Recall Procedure $\cD$ in which nodes send a message sequentially based to their $(block,\,home)$ label.
\begin{algorithm}{}
{\bf Procedure $\cD$}
\begin{algorithmic}
\STATE {\bf In parallel for all nodes $u \in V$}
\STATE $N_u \leftarrow \emptyset$ // the set of nodes known to $u$
\STATE $H_u \leftarrow$ the (block, home) label of $u$
\FOR{$block = 1..\mu$}
    \FOR{$home = 1..\nu$}
	   \IF{$H_u = (block, home)$}
            \STATE Transmit $hello$
        \ELSIF{a $hello$ is  heard}
            \STATE $N_u \leftarrow N_u \cup (block, home)$
        \ENDIF
    \ENDFOR
\ENDFOR
\end{algorithmic}
\end{algorithm}
Since Lemma~\ref{l:singleHome} and Lemma~\ref{l:noCollide} both hold for $\cP^2$, we have that Lemma~\ref{l:procD} also still holds for $\cP^2$.

\begin{relemma}{\ref{l:procD}}
Upon completion of Procedure $\cD$, nodes know of all nodes within distance $1$ and greater than $s$ of them.
\end{relemma}

By repeating Procedure $\cD$ $i$ times (augmenting the $hello$ message with the location of known nodes), a node $u$ can learn about other nodes within hop distance $i$ ($\Gamma_{\leq i}(u)$).
However, the hop distance from a node $u$ to a node $v$ may be arbitrarily large, even if $v$ is within geometric distance $1$ of $u$.

Hence, for diameter $D$ graphs, the use of Procedure $\cD$ alone could take as many as $D g^2$ rounds to discover the existence of all nodes within distance $1$.
In this case, the message could be transmitted without the assumption of spontaneous wake up from the source to the nodes.
We have the following lemma:
\begin{lemma}
In all networks where nodes are placed on the plane, of diameter $D$ and granularity $g$, the broadcast time is in $O(D g^2)$.
\end{lemma}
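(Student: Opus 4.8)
The plan is to broadcast by simply iterating Procedure $\cD$, using it simultaneously as a neighbour-discovery primitive and as a one-hop message relay. First I would fix the algorithm precisely: augment the $hello$ sent in Procedure $\cD$ so that a node appends the source message $m$ (or a single bit indicating it already holds $m$) whenever it knows $m$, and then run Procedure $\cD$ repeatedly, in parallel across all nodes, for $D$ iterations. The source starts out holding $m$, and every other node adopts $m$ the first time it hears an augmented $hello$.

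Next I would pin down the cost of one iteration on the partition $\cP^2$. Each region carries $\mu = \lceil 3/l\rceil^2$ blocks and each block carries $\nu = \lceil \sqrt{2}\,l/\gamma\rceil^2$ homes, with $l = \max\{(1-s)/(3\sqrt{2}),\gamma/\sqrt{2}\}$, so that $\mu \in \Theta(1/l^2)$ and $\nu \in \Theta(l^2/\gamma^2)$. Hence a single execution of Procedure $\cD$ sweeps over $\mu\nu \in \Theta((1/l^2)(l^2/\gamma^2)) = \Theta(1/\gamma^2) = \Theta(g^2)$ labels and therefore runs in $\Theta(g^2)$ rounds.

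The correctness argument would be an induction on hop distance from the source. Lemmas~\ref{l:singleHome} and~\ref{l:noCollide} hold for $\cP^2$, so Lemma~\ref{l:procD} applies: in a single execution of Procedure $\cD$ every node receives a clean $hello$ from each of its graph neighbours, free of collisions and of swamping interference from the distant identically-labelled transmitters in other regions. Consequently, if after $i$ iterations every node at hop distance at most $i$ from the source holds $m$, then during iteration $i+1$ each such node transmits its augmented $hello$ and, by Lemma~\ref{l:procD}, each of its neighbours hears $m$; in particular every node at hop distance $i+1$ now holds $m$. The base case $i=0$ is the source alone. Since the network has diameter $D$, every node lies within hop distance $D$ of the source, and so all nodes hold $m$ after $D$ iterations.

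Finally I would combine the two estimates: $D$ iterations, each costing $\Theta(g^2)$ rounds, give a total broadcast time in $O(Dg^2)$, as claimed. The crux I expect is the inductive step, namely checking that embedding $m$ into the discovery schedule does not disturb the collision- and swamping-freeness guaranteed by Lemma~\ref{l:procD}, so that the message is assured to advance exactly one hop per iteration; once this is granted, the time bound follows immediately from the $\Theta(g^2)$ per-iteration cost.
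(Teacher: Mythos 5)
Your proposal is correct and is essentially the paper's own argument: the paper justifies this lemma (implicitly, in the discussion immediately preceding it) by repeating Procedure $\cD$ with the message piggybacked on the $hello$, each iteration taking $\Theta(\mu\nu)=\Theta(g^2)$ collision-free rounds by Lemmas~\ref{l:singleHome}, \ref{l:noCollide} and~\ref{l:procD}, and advancing the message at least one hop, for $D$ iterations. Your write-up simply makes explicit the induction on hop distance that the paper leaves as a sketch.
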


Hence, under our communication model, Procedure $\cD$ is insufficient to speed up broadcast in the spontaneous wake-up model as opposed to the conditional wake-up model, in the worst case.

\subsection{Procedure $\cD^*$ for Neighborhood Discovery}
Recall Procedure $\cD_{(b, h)}$ using collisions to discover nodes within distance $s$.
\begin{algorithm}{}
{\bf Procedure $\cD_{(b, h)}$}
\begin{algorithmic}
\STATE // $N_u$ is the set of nodes known to $u$
\STATE // $H_u$ is the (block, home) label of $u$
\STATE {\bf In parallel for all nodes $u \in V$}
\FOR{$block = 1..\mu$}
    \FOR{$home = 1..\nu$}
    	\IF{$H_u = (block, home)$ {\bf OR} $H_u = (b, h)$}
	       \STATE Transmit $hello$
	   \ELSIF{no $hello$ is heard {\bf AND} $(b, h) \in N_u$}
	       \STATE $N_u \leftarrow N_u \cup (block, home)$
	   \ENDIF
    \ENDFOR
\ENDFOR
\end{algorithmic}
\end{algorithm}

\begin{lemma}
\label{l:procDbh2D}
By Procedure $\cD_{(b,h)}$, nodes neighbor to $(b,h)$ know all other nodes within geometric distance $1$ of them in time $\Theta(g^2)$.
\end{lemma}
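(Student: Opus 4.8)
The plan is to mirror the proof of the one–dimensional counterpart, Lemma~\ref{l:procDbh}, since Procedure~$\cD_{(b,h)}$ is syntactically identical here; only the underlying partition is now $\cP^2$, so the block and home counts $\mu,\nu$ change, and with them the running time. I would split the argument into a running-time estimate and a correctness part, exactly as in the line case, and reuse Lemma~\ref{l:singleHome} (one node per home) and Lemma~\ref{l:noCollide} (no cross-region collisions), both of which were already shown to hold for $\cP^2$.

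For the running time, Procedure~$\cD_{(b,h)}$ is a double loop over all $(block,home)$ labels, each iteration taking a constant number of rounds, so its length is $\Theta(\mu\nu)$. Using $\gamma/\sqrt{2}\le l\le 1/\sqrt{2}$ from the definition of $\cP^2$, I would estimate $\mu=\lceil 3/l\rceil^2\in\Theta(1/l^2)$ and $\nu=\lceil \sqrt{2}l/\gamma\rceil^2\in\Theta((l/\gamma)^2)$, whence
$$
\mu\nu \in \Theta\!\left(\frac{1}{l^2}\cdot\frac{l^2}{\gamma^2}\right)=\Theta\!\left(\frac{1}{\gamma^2}\right)=\Theta(g^2).
$$
The point worth stressing is that in the plane each of $\mu$ and $\nu$ is the \emph{square} of its one-dimensional value, which is precisely what upgrades the $\Theta(g)$ bound of Lemma~\ref{l:procDbh} to $\Theta(g^2)$.

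For correctness, I would fix a node $u$ sharing a link with $(b,h)$, so that $(b,h)$ lies beyond distance $s$ and within distance $1$ of $u$; note that $u$ already knows $(b,h)$ from the preceding execution of Procedure~$\cD$ (Lemma~\ref{l:procD}), so the guard $(b,h)\in N_u$ is satisfied. Throughout the procedure the node at $(b,h)$ transmits, so in every round $u$ either hears it or perceives only indistinguishable noise. In the round devoted to label $(block,home)$, I claim that $u$ fails to hear $(b,h)$ if and only if some node occupying $(block,home)$ within distance $1$ of $u$ transmits: if that node lies within distance $s$ it swamps $u$, and if it lies in $(s,1]$ it is a second neighbour and collides with $(b,h)$ at $u$. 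Hence the guard ``no $hello$ is heard \textbf{AND} $(b,h)\in N_u$'' records exactly those labels hosting a node within distance $1$ of $u$, and since the loop schedules every label together with $(b,h)$, upon completion $u$ has discovered every $w$ with $dist(u,w)\le 1$.

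The step I expect to be the main obstacle is the geometric justification that the silence at $u$ is a \emph{faithful and unambiguous} indicator of a within-distance-$1$ node, i.e.\ ruling out that a transmitter in some other region spuriously silences $u$ and yields a false positive. This is where the $\cP^2$-version of Lemma~\ref{l:noCollide} is essential: since each region has side $3$ and each block side $l\le 1/\sqrt{2}<1$, no point can lie within distance $1$ of two identically labelled blocks from distinct regions, so neither a stray $(b,h)$-transmitter nor a stray $(block,home)$-transmitter from another region can reach $u$. Combined with Lemma~\ref{l:singleHome} (which makes the per-label detection well defined) and Fact~\ref{f:knownLocation} (which lets $u$ resolve the single admissible region for a recorded label), this closes the argument and establishes both the $\Theta(g^2)$ time and the correctness claimed by the lemma.
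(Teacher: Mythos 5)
Your proposal is correct and follows essentially the same route as the paper's proof: the identical $\Theta(\mu\nu)$ time computation (with $\mu,\nu$ now the squares of their one-dimensional values, yielding $\Theta(g^2)$) followed by the same silence-as-detection correctness argument, that whenever the persistent transmitter $(b,h)$ is not heard, a node with the scheduled $(block,home)$ label within distance $1$ of $u$ must be transmitting. Your additions — explicitly verifying the guard $(b,h)\in N_u$ via Lemma~\ref{l:procD} and ruling out false positives from other regions via the $\cP^2$ version of Lemma~\ref{l:noCollide} — merely make explicit what the paper leaves implicit (and you correctly write $\nu=\lceil\sqrt{2}l/\gamma\rceil^2$, fixing a typo in the paper's proof).
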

\begin{proof}
The time complexity of Procedure $\cD_{(b,h)}$ is in $\Theta(\mu\nu)$.
With $\gamma \leq l \leq 1$, we have that $\mu = \lceil 3/l \rceil^2 \in \Theta((1/l)^2)$ and $\nu = \lceil \sqrt{2}l/\gamma \rceil \in \Theta((l/\gamma)^2)$.
Hence,
$$
\mu\nu \in \Theta((1/l)^2(l/\gamma)^2) = \Theta((1/\gamma)^2) = \Theta(g^2).
$$

We now prove correctness.
Consider the execution of Procedure $\cD_{(b,h)}$, during which the node $(b,h)$ will transmit messages at every round.
A message from $(b,h)$ will be heard by $u$ at every round when no collision occurs at $u$.
Furthermore, when no message can be distinguished, another node within distance $1$ of $u$ must be transmitting from the home with label $(block,home)$ (as defined in the procedure).
Since Procedure $\cD_{(b,h)}$ schedules all nodes to transmit in pairs with $(b,h)$,
upon completion of this procedure, the node $u$ will have discovered all nodes $w$ for which the geometric distance $dist(u,w)$ from $u$ is at most $1$.
\end{proof}

Recall Procedure $\cD^*$ consisting of one execution of Procedure $\cD$ followed by the execution of Procedure $\cD_{(b,h)}$ for all $(b,h) \in \{1, 2, \ldots, \mu\} \times \{1, 2, \ldots, \nu)\}$.
For the plane, Procedure $\cD^*$ allows the discovery of nodes within distance $1$.
More formally, refer to the pseudo code for Procedure $\cD^{*}$.
\begin{algorithm}{}
{\bf Procedure $\cD^{*}$}
\begin{algorithmic}
\STATE {\bf Call} Procedure $\cD$
\FOR{$b = 1..\mu$}
    \FOR{$h = 1..\nu$}
    	\STATE {\bf Call} Procedure $\cD_{(b, h)}$
    \ENDFOR
\ENDFOR
\end{algorithmic}
\end{algorithm}

Procedure $\cD^*$ accomplishes the same function in the plane as it does in the line however, with increased time complexity.

\begin{lemma}
\label{l:procD*2D}
By Procedure $\cD^*$, nodes know all other nodes within distance $1$ of them in time $\Theta(g^4)$.
\end{lemma}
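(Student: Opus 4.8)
**Lemma~\ref{l:procD*2D}** asserts that Procedure $\cD^*$ in the plane completes neighborhood discovery in time $\Theta(g^4)$. The plan is to mirror exactly the structure of the proof of Lemma~\ref{l:procD*} from the Highway model, changing only the per-iteration cost to reflect the two-dimensional partition. The argument splits cleanly into a timing bound and a correctness claim, with the timing bound carrying essentially all the content.

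For the time complexity, I would begin from the fact that Procedure $\cD^*$ invokes Procedure $\cD$ once and then invokes Procedure $\cD_{(b,h)}$ once for each of the $\mu\nu$ pairs $(b,h) \in \{1,\ldots,\mu\} \times \{1,\ldots,\nu\}$. By Lemma~\ref{l:procDbh2D}, each single call to $\cD_{(b,h)}$ runs in $\Theta(\mu\nu) = \Theta(g^2)$ rounds. Since Procedure $\cD$ itself costs only $\Theta(\mu\nu) = \Theta(g^2)$ (it is one sweep over all $(block,home)$ labels) and is dominated by the loop, the total time is $\Theta(\mu\nu) \cdot \Theta(\mu\nu) = \Theta(\mu^2\nu^2)$. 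Substituting the planar values $\mu = \lceil 3/l\rceil^2 \in \Theta((1/l)^2)$ and $\nu = \lceil \sqrt{2}l/\gamma\rceil^2 \in \Theta((l/\gamma)^2)$, I get
$$
\mu^2\nu^2 \in \Theta\!\left(\left(\tfrac{1}{l}\right)^4 \left(\tfrac{l}{\gamma}\right)^4\right) = \Theta\!\left(\tfrac{1}{\gamma^4}\right) = \Theta(g^4),
$$
which is the claimed bound. The cleanest phrasing simply notes that the planar per-call cost $\Theta(g^2)$ is the square of the line per-call cost $\Theta(g)$, so the overall $\Theta(\mu^2\nu^2)$ becomes $\Theta(g^4)$ rather than the $\Theta(g^2)$ of the one-dimensional case.

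For correctness, I would reuse the argument from Lemma~\ref{l:procD*} verbatim in structure: by Fact~\ref{f:connect}, every node $u$ has at least one neighbor, so there is some pair $(b,h)$ naming a node within distance $1$ of $u$; once Procedure $\cD$ has informed $u$ of that neighbor's label, the call $\cD_{(b,h)}$ applies and, by Lemma~\ref{l:procDbh2D}, lets $u$ discover \emph{all} nodes within geometric distance $1$ of it. Since Procedure $\cD^*$ cycles through every possible $(b,h)$, the relevant call is guaranteed to occur for each node, so all nodes complete neighborhood discovery. The one point deserving care is that Lemma~\ref{l:procDbh2D} already establishes the full-neighborhood conclusion for any single distinguished neighbor, so a single successful pairing suffices per node and no further induction is needed.

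I expect no genuine obstacle here: the hard combinatorial and geometric work has already been absorbed into Lemma~\ref{l:procDbh2D} (the $\Theta(g^2)$ planar cost and the collision-based detection correctness) and into the validity of Lemmas~\ref{l:singleHome} and~\ref{l:noCollide} for $\cP^2$, which were reestablished earlier in this section. The only step requiring attention is the arithmetic of composing two $\Theta(g^2)$-cost procedures to obtain $\Theta(g^4)$, namely confirming that it is the number of outer iterations ($\mu\nu \in \Theta(g^2)$) times the inner cost ($\Theta(g^2)$) that produces the quartic blowup, rather than any additive effect; this is routine once the $\mu,\nu$ substitutions are written out as above.
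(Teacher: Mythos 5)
Your proposal is correct and follows essentially the same route as the paper's proof: the time bound comes from $\mu\nu \in \Theta(g^2)$ calls to Procedure $\cD_{(b,h)}$, each costing $\Theta(\mu\nu) \in \Theta(g^2)$ by Lemma~\ref{l:procDbh2D}, giving $\Theta(\mu^2\nu^2) = \Theta(g^4)$, and correctness comes from Fact~\ref{f:connect} guaranteeing each node a neighbor $(b,h)$ so that Lemma~\ref{l:procDbh2D} applies. Your explicit substitution of $\mu = \lceil 3/l\rceil^2$ and $\nu = \lceil \sqrt{2}l/\gamma\rceil^2$ merely spells out a step the paper leaves to the earlier lemma; there is no substantive difference.
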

\begin{proof}
The time complexity of Procedure $\cD_{(b,h)}$ is in $\Theta(\mu\nu)$.
Hence, the time complexity of Procedure $\cD^*$ is in $\Theta(\mu^2\nu^2)$.
By the above and by Lemma~\ref{l:procDbh2D}, the time complexity of Procedure $\cD^*$ is therefore in $\Theta(g^4)$.

We now prove correctness.
For any node $u$, since the graph is connected, by Fact~\ref{f:connect} there exists a node $(b,h)$ such that Procedure $\cD_{(b,h)}$ will be executed.
By the above and by Lemma~\ref{l:procDbh2D}, all nodes know all other nodes that are within distance $1$ of them.
\end{proof}

It remains open whether or not the time $\theta(g^4)$ for neighbourhood discovery is optimal.

One degenerate case of swamping is when $s < \gamma$;
then, swamping has no real effect on the network, which becomes identical to a congruent GRN.
In that case, the process of discovering the neighbours takes only the time necessary for all nodes to announce their presence once, $\theta(g^2)$.
This is only true because of the absence of nodes with which nearby nodes can not communicate.

However, once we have $s \geq \gamma$, some links of the congruent GRN are deleted in the network with swamping;
to discover nodes at close proximity then becomes a non-trivial, collaborative task.
When messages must remain small, it is impossible to share locations of many other nodes to speed up the neighbourhood discovery process.

For small enough $s$ and unbounded message size, the task of neighbourhood discovery may be sped up, but only under special conditions.
In the 2-dimensional case, the length of paths between nearby nodes seems bounded only by the diameter $D$.
Therefore, if nodes transmit long messages containing their current known mapping of neighbour nodes on a turn basis, repeating this process $D$ times to allow dissemination of these maps, i.e., in time $\in \Theta(Dg)$.
The value of $D$ may be much larger than $g^4$.

With knowledge of all nodes within distance $1$, nodes have the basic tools to select distinguished nodes to relay messages for all nodes of a block.
We discuss such a procedure in the following subsection.

\subsection{Selection of Spokesman Nodes}
In this section, we assume that nodes know which nodes of their own block possess the source message $m$.
The spokesmen nodes are those nodes in each row, column and diagonal of homes within a block which possess the message and which are located in the home which is closest to either end of that row, column or diagonal.
We now state the following lemma.
\begin{lemma}
\label{l:goodSpokesman}
If all spokesmen of a block $b$ transmit in a collision-avoidance scheme, then all nodes neighbor to any node in $b$ will receive the source message.
\end{lemma}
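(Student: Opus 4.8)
The plan is to split the statement into two essentially independent parts, coverage and collision-freeness, and to spend almost all the effort on coverage. Collision-freeness is largely granted: the hypothesis supplies a collision-avoidance scheme, and since each block has only a constant number of spokesman types (a left/right pair for each of the row, column and two diagonal families), they can be ordered into distinct rounds, while simultaneous transmissions in distinct regions are harmless by Lemma~\ref{l:noCollide} (which we re-established for $\cP^2$). Hence, once we know that every target node lies in the transmission annulus of some spokesman, that node will cleanly receive $m$ in the round that spokesman fires. The real content is therefore the purely geometric claim that the union of the spokesmen's annuli contains the union of the annuli of all message-bearing nodes of $b$; I would state the lemma's target $\Gamma(b)$ in this message-bearing sense, exactly as in the one-dimensional analogue used inside the proof of Lemma~\ref{l:ProcT}.

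To prove coverage, fix a target $z$ and a message-bearing witness $w\in b$ with $s<\|z-w\|\le 1$; write $\rho=\|z-w\|$ and let $\hat u$ be the unit vector from $w$ to $z$. If the block degenerates to a single home (the regime $\gamma\ge(1-s)/3$, where $l=\gamma/\sqrt2$ and $\nu=\lceil\sqrt2 l/\gamma\rceil^2=1$ by Lemma~\ref{l:singleHome}) the claim is immediate, since $w$ is itself the unique spokesman. Otherwise $l=(1-s)/(3\sqrt2)$, the block has diameter $l\sqrt2=(1-s)/3$, and the four spokesman families give eight extreme directions spaced $45^\circ$ apart, so $\hat u$ lies within $22.5^\circ$ of the axis of some family; within that family $w$ lies between its two extreme message-bearing nodes $p_-,p_+$, both spokesmen.

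I would then choose $p_-$ or $p_+$ according to where $z$ sits in the annulus, using the threshold $\rho=(1+s)/2$ and writing $\vec\delta=p-w$, so that $\|z-p\|^2=\rho^2-2\rho(\hat u\cdot\vec\delta)+\|\vec\delta\|^2$ with $\|\vec\delta\|\le(1-s)/3$ (the block diameter). In the near case $\rho\ge(1+s)/2$, take the spokesman on the side of $z$, so the axial component of $\vec\delta$ is positive and $\cos(\angle(\hat u,\vec\delta))>1/\sqrt2$; since $\rho\cos(\angle(\hat u,\vec\delta))\ge(1+s)/(2\sqrt2)>1/3\ge\|\vec\delta\|$, the quadratic is decreasing in $\|\vec\delta\|$, giving the outer bound $\|z-p\|\le\rho\le1$, while the inner bound follows from $\|z-p\|\ge\rho-\|\vec\delta\|\ge(1+5s)/6>s$. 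In the far case $\rho<(1+s)/2$, take the spokesman on the far side, so $\hat u\cdot\vec\delta\le0$ and $\|z-p\|\ge\rho>s$, while $\|z-p\|\le\rho+\|\vec\delta\|\le(1+s)/2+(1-s)/3=(5+s)/6\le1$. Either way $z$ lies in the annulus $A_p$ of a spokesman $p$, which is precisely what coverage requires, and this also reveals why the constants are forced: the factor $3$ and the $\sqrt2$ in $l$ are exactly what keep both inequalities tight.

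The main obstacle is the angle bookkeeping, since the near-case monotonicity has only a thin margin ($(1+s)/(2\sqrt2)\approx0.354$ against $1/3\approx0.333$) and nodes may sit anywhere inside homes of side $\gamma/\sqrt2$, so $\vec\delta$ is not exactly axial. I would control this through the trade-off that a large cross-axis angle forces a correspondingly small $\|\vec\delta\|$ (the endpoints then lie in nearly adjacent homes), so that a poorly aligned displacement is also too short to push $z$ out of the annulus; making this trade-off quantitative, and confirming that the offsets from home size never violate $\|\vec\delta\|\le\rho\cos(\angle(\hat u,\vec\delta))$, is where the genuine care is needed. The remaining steps — reassembling coverage with the collision-avoidance hypothesis to conclude that each target actually receives $m$ — are then routine.
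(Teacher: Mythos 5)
Your decomposition (collision-freeness granted, coverage as the real content) and your far case are sound, and your threshold $\rho=(1+s)/2$ is in fact exactly the dichotomy the paper itself uses; but the near case rests on the unproven claim $\cos(\angle(\hat u,\vec\delta))>1/\sqrt2$, and this claim is false in the worst configuration, so the gap you flag at the end is genuine and your sketched repair does not close it. The displacement $\vec\delta$ from $w$ to a spokesman of the chosen family can deviate from the family axis by up to $45^\circ$: for instance, two nodes in adjacent diagonal homes with intra-home offsets $a\approx\gamma/\sqrt2$ and $b\approx-\gamma/\sqrt2$ yield axial and cross components both $\approx\gamma$, hence deviation arbitrarily close to $45^\circ$ with $\|\vec\delta\|\approx\sqrt2\gamma$. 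Adding the $22.5^\circ$ misalignment of $\hat u$, the angle $\theta=\angle(\hat u,\vec\delta)$ can reach $67.5^\circ$, and $\cos 67.5^\circ\approx 0.383<1/\sqrt2$. Moreover your proposed trade-off (``large cross-axis angle forces small $\|\vec\delta\|$'') fails quantitatively: in the configuration above $\|\vec\delta\|=\sqrt2\gamma$, which for $\gamma$ near $(1-s)/(3\sqrt2)$ is essentially the full block diameter $(1-s)/3$, while $\rho\cos\theta$ can be as small as about $0.191(1+s)$; for small $s$ (e.g., $s=\gamma=0.16$) one gets $\|\vec\delta\|>\rho\cos\theta$, so the monotonicity condition $\|\vec\delta\|\le\rho\cos(\angle(\hat u,\vec\delta))$ on which your near case hinges simply fails even though the desired conclusion is true.

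The damage is local and repairable: you never needed monotonicity, only $\|z-p\|\le\rho$, and since $\|z-p\|^2-\rho^2=\|\vec\delta\|\bigl(\|\vec\delta\|-2\rho\cos\theta\bigr)$, the weaker factor-two condition $\|\vec\delta\|\le 2\rho\cos\theta$ suffices; here $2\rho\cos 67.5^\circ\ge(1+s)\cos 67.5^\circ>1/3\ge\|\vec\delta\|$ holds for all $s$, so your near case closes once you actually prove the $45^\circ$ deviation bound from the minimum distance $\gamma$ and the $\gamma/\sqrt2$ home side (which does hold, by the kind of computation sketched above). It is worth noting that the paper avoids this angle bookkeeping entirely: it proves Lemma~\ref{l:closer} (for any point outside the block, some spokesman among the eight surrounding a non-spokesman is closer, via sectors of angle at most $\pi/2$) and Lemma~\ref{l:farther} (some spokesman is farther, via halfplanes covering the plane), and then applies your same threshold: both spokesmen lie within the block diameter $(1-s)/3<(1-s)/2$ of $u$, so either the closer one stays at distance $>s$ from $v$ or the farther one stays at distance $<1$, without ever requiring the chosen spokesman to be aligned with the direction of $z$. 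Your message-bearing reading of the lemma's conclusion is consistent with how the paper uses it in Lemma~\ref{l:ProcT2}; so if you keep your alignment-based selection, prove the deviation bound and replace the factor-one inequality by the factor-two one, or switch to the closer/farther formulation, which buys the same conclusion with strictly less delicate geometry.
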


The proof will be given following some preliminary facts and discussion.
More formally, the rules for deciding which nodes are spokesmen are as follows:
For a row (column) of homes of partition $\cP^2$, among nodes possessing the message, those two nodes in homes closest to the West and East (North and South) borders of a block in $\cP^2$ are spokesmen.
For a diagonal of homes of partition $\cP^2$, among nodes possessing the message, those two nodes in homes closest to the borders of a block in $\cP^2$ are spokesmen.
See Figure~\ref{fig:dynspokesmen}.
\begin{figure}[!htb]
\centering
\includegraphics[scale=0.75, bb = 0 0 224 224,angle=0]{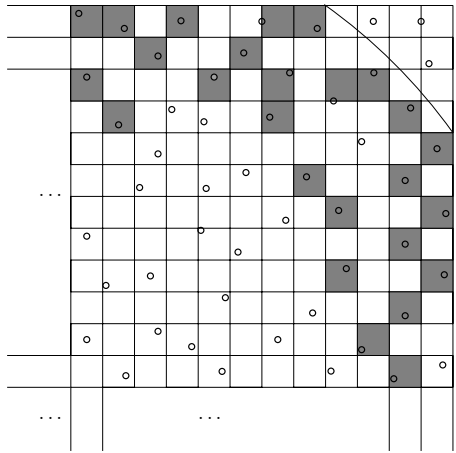}
\caption{The spokesmen of a block}
\label{fig:dynspokesmen}
\end{figure}

If a spokesman is chosen in column (row) $i$ because of its proximity to the North or South (West or East) border, then it has the label $N_i$ and/or $S_i$, resp. ($W_i$ and/or $E_i$, resp.).
If a spokesman is chosen in Southeast-Northwest (Southwest-Northeast) diagonal $i$ because of its proximity to the Southeastern or Northwestern (Southwestern or Northeastern) border, then it has the label $SE_i$ and/or $NW_i$, resp. ($SW_i$ and/or $NE_i$, resp.).
Spokesmen can be assigned more than one such label.

Observe that there are $O(l^2g^2)$ homes inside a block;
there are $O(lg)$ rows of homes, $O(lg)$ columns of homes and $O(lg)$ diagonals of homes inside a block;
there are at most $2$ spokesmen elected for each row, each column and each diagonal.
Hence, each block contains $O(lg)$ spokesmen.
We now claim that only these spokesmen are necessary to broadcast.

Before presenting the proof, we recall the following fact.
\begin{fact}
\label{f:halfplane}
Consider two vertices $A$ and $B$ and the line $\overline{AB}$ joining them.
The line $l$ perpendicular to $\overline{AB}$ and through its center defines two halfplanes $H_{A,B}$ and $H_{B,A}$.
The halfplane $H_{A,B}$ (resp. $H_{B,A}$) contains $A$ ($B$) and has all points closer to $A$ ($B$) than to $B$ ($A$).
\end{fact}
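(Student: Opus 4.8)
The plan is to recognize the line $l$ as the perpendicular bisector of the segment $\overline{AB}$ and to verify the closer-than dichotomy by a direct squared-distance computation. First I would fix a convenient coordinate frame: place the midpoint of $\overline{AB}$ at the origin with $\overline{AB}$ along the horizontal axis, so that $A = (-a, 0)$ and $B = (a, 0)$ for some $a > 0$. Under this choice the line $l$, being perpendicular to $\overline{AB}$ and passing through its center, is exactly the vertical axis $\{x = 0\}$, and the two halfplanes it bounds are $\{x < 0\}$, which contains $A$, and $\{x > 0\}$, which contains $B$; by the labeling convention these are $H_{A,B}$ and $H_{B,A}$ respectively.

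Next I would compute, for an arbitrary point $P = (x, y)$, the difference of the squared distances to the two endpoints:
\[
dist(P,A)^2 - dist(P,B)^2 = \bigl((x+a)^2 + y^2\bigr) - \bigl((x-a)^2 + y^2\bigr) = 4ax.
\]
Since $a > 0$, the sign of this difference is precisely the sign of $x$. Hence $dist(P,A) < dist(P,B)$ exactly when $x < 0$, i.e.\ exactly when $P \in H_{A,B}$, and $dist(P,A) > dist(P,B)$ exactly when $x > 0$, i.e.\ exactly when $P \in H_{B,A}$; equality holds if and only if $x = 0$, i.e.\ if and only if $P$ lies on $l$. This establishes that $H_{A,B}$ consists of all points strictly closer to $A$ than to $B$, and symmetrically that $H_{B,A}$ consists of all points strictly closer to $B$, which is the assertion of the fact.

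I expect no genuine obstacle here, since this is the classical characterization of the perpendicular bisector; the only point requiring care is the bookkeeping that matches the label convention $A \in H_{A,B}$ with the correct sign of the coordinate, and the symmetric placement chosen above makes this transparent. Should a coordinate-free argument be preferred, the same conclusion follows synthetically: for $P$ on $l$ the two right triangles determined by the foot of the perpendicular from $P$ to $\overline{AB}$ are congruent, giving $dist(P,A) = dist(P,B)$, while displacing $P$ onto the side of $A$ strictly shortens $dist(P,A)$ relative to $dist(P,B)$. The displayed computation is simply the quantitative form of this observation, so I would present the coordinate version as the main proof for brevity.
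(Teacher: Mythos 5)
Your proof is correct: the squared-distance computation $dist(P,A)^2 - dist(P,B)^2 = 4ax$ is the standard characterization of the perpendicular bisector, and your sign bookkeeping matches the labeling convention $A \in H_{A,B}$. Note that the paper offers no proof of this fact at all --- it is merely ``recalled'' as classical background before the proof of Lemma~\ref{l:closer} --- so your argument simply makes explicit exactly what the authors took for granted.
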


We now proceed to the presentation of two preparatory lemmas: Lemma~\ref{l:closer} and Lemma~\ref{l:farther}.
Using these lemmas, we will then prove Lemma~\ref{l:goodSpokesman}.
\begin{lemma}
\label{l:closer}
The set of spokesmen of a block is closer to any point $p$ outside the block than any non-spokesman node.
\end{lemma}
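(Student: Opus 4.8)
The plan is to prove the contrapositive: for every point $p$ outside the block and every message-bearing node $w$ of the block that is \emph{not} a spokesman, I exhibit a spokesman $s$ with $dist(s,p) \le dist(w,p)$; equivalently, the message-node of the block closest to $p$ is always a spokesman. The central tool is Fact~\ref{f:halfplane}: the inequality $dist(s,p) \le dist(w,p)$ holds exactly when $p$ lies in the halfplane $H_{s,w}$, so the whole problem reduces to locating $p$ relative to the perpendicular bisector of the single candidate pair $(s,w)$.

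First I would set up a decomposition of directions. The spokesman rule is defined along four axes of the home mesh — the horizontal (rows), the vertical (columns), and the two diagonals — each contributing two extremal directions, for eight principal half-directions in all (N, S, E, W and the four diagonals). These eight directions split the plane around the block into eight angular sectors of width $45^\circ$, so the direction vector $\vec{wp}$ from any interior node $w$ to any exterior point $p$ lies within $22.5^\circ$ of exactly one principal direction.

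Next I would translate ``non-spokesman'' into a usable geometric statement. A node $w$ fails to be a spokesman precisely when it is extremal in none of its row, its column, or its two diagonals; hence, among the message-bearing nodes, there are nodes flanking $w$ on both sides along all four axes, and in particular a message-node lying in a home strictly farther along whichever principal direction is aligned with $\vec{wp}$. Taking that axis and its extremal occupied home yields the spokesman $s$. Using the half-open home convention (with Lemma~\ref{l:singleHome} ensuring one node per home), $s$ is displaced from $w$ by a strictly positive amount along the chosen axis and by at most the home diameter transverse to it. Combining this with the $22.5^\circ$ alignment of $p$, I would verify through the halfplane criterion that $p$ lies on the $s$-side of the bisector of the segment $sw$, so that $s$ is at least as close to $p$ as $w$ is.

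The main obstacle is exactly this last verification in the presence of home granularity: the longitudinal gain from $w$ to $s$ may be arbitrarily small while the transverse displacement can be as large as the home diameter $\gamma/\sqrt{2}$, which could in principle push $s$ farther from an off-axis $p$. I expect to resolve this in one of two ways: either by a chain argument — repeatedly replacing the current node by one strictly farther along the chosen direction until the extremal (spokesman) home is reached, so the cumulative longitudinal gain dominates the single bounded transverse error — or by exploiting that the $22.5^\circ$ sector bound forces $p$ close enough to the axis that the bisector of $sw$ always separates $w$ from $p$. Pinning down the monotonicity arithmetic, with the block side length $l$ and the home diameter as the relevant scales, is the delicate step; the remainder is bookkeeping over the eight sectors, which are symmetric under the dihedral symmetry of the square block and so reduce to a single representative case.
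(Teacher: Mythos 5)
Your overall frame (halfplane criterion from Fact~\ref{f:halfplane}, sectoring the directions around the non-spokesman node) matches the paper's, but your key reduction --- pick the \emph{single} spokesman on the axis best aligned with $\vec{wp}$, using the $22.5^\circ$ sector bound --- has a genuine gap, and neither of your proposed repairs closes it. Write $w$ at the origin and let $s$ be the chosen spokesman; the halfplane condition $dist(s,p)\le dist(w,p)$ is $p\cdot(s-w)\ge |s-w|^2/2$, which for $p$ at distance $d$ in direction $\hat u$ reads $d\,(\hat u\cdot(s-w))\ge |s-w|^2/2$. This fails for \emph{moderate} $d$ when $s$ is transversely displaced. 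Concretely (normalize $\gamma=1$, home side $\gamma/\sqrt2\approx 0.707$): let the east row spokesman sit in the adjacent home at near-maximal transverse offset, $s\approx(0.72,-0.70)$ (consistent with $dist(w,s)\ge\gamma$ and both nodes in one row of homes), and let $p$ lie at angle $22.5^\circ$ above the east axis, so the east axis is indeed the aligned one. Then $\hat u\cdot s\approx 0.9239\cdot 0.72-0.3827\cdot 0.70\approx 0.397$ while $|s|^2/2\approx 0.504$, so the condition fails for all $d<1.27$. Since a non-spokesman $w$ need only be one home ($\approx 0.71$) from the block boundary, $p$ outside the block at such $d$ is entirely feasible, and your chosen $s$ is strictly \emph{farther} from $p$ than $w$. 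Your fallback (ii) is thus false as stated, and fallback (i) makes things worse: lengthening the chain increases $|s-w|$, and it is precisely the term $|s-w|^2/2$ that defeats the inequality at short range. (Incidentally, your worry that the longitudinal gain "may be arbitrarily small" is unfounded --- $dist(w,s)\ge\gamma$ together with transverse offset $<\gamma/\sqrt2$ forces longitudinal gain $\ge\gamma/\sqrt2$ --- but fixing that does not rescue the argument.)

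The paper avoids this by never committing to one spokesman: it takes the \emph{pair} of flanking spokesmen $u_i,u_{i+1}$ whose actual directions from $u$ span the sector containing $p$, and proves a triangle-containment lemma: if the sector angle at $C$ is at most $\pi/2$, then the locus of points closer to $C$ than to both $A$ and $B$, namely $S\cap H_{C,A}\cap H_{C,B}$, is contained in the triangle $ACB$. Since all eight flanking spokesmen and $u$ lie in the (convex) block, every such triangle lies inside the block, so any $p$ outside the block is outside all the triangles and hence closer to one of the two bounding spokesmen of its sector. In my counterexample above it is the northeast-diagonal spokesman, not the east one, that is closer --- exactly what the pairwise argument supplies and the single-axis choice cannot. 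To repair your proof you would need to replace the "nearest axis" selection by this two-spokesman sector argument (and verify, as the paper does from the minimum distance $\gamma$ and the partition geometry, that each of the eight sector angles is less than $\pi/2$), at which point you have essentially reconstructed the paper's proof.
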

\begin{proof}
Consider the sector $S$ of a plane defined by the angle $ACB$ of a triangle.
We first show that if the angle $\theta$ at $C$ is at most $\pi/2$, then all points in the sector outside the triangle $ACB$ are closer to $A$ and $B$ than they are to $C$.

Consider the halfplanes defined by the vertex pairs $A,C$ and $B,C$ as described in Fact~\ref{f:halfplane}.
If the node $C$ is closer than $A$ and $B$ to a point $p$, then $p$ is in the intersection of $H_{C,A}$ and $H_{C,B}$.
Moreover, if $\theta = \pi/2$, then $S \cap H_{C,A} \cap H_{C,B}$ is a rectangle contained within the triangle $ACB$.
As $\theta$ decreases, the region $S \cap H_{C,A} \cap H_{C,B}$ remains contained within the triangle $ACB$.
See Figure~\ref{fig:triangle_domination}.
\begin{figure}[!htb]
\centering
\includegraphics[scale=0.75, bb = 0 0 304 80,angle=0]{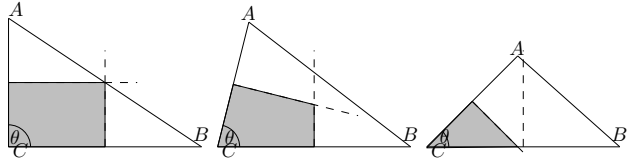}
\caption[Proximity by sector of spokesmen]{Proximity by sector of spokesmen: $A$ and $B$ are in the halfplanes containing all points of the sector not in the triangle $ACB$.}
\label{fig:triangle_domination}
\end{figure}
It follows that all other points of $S$ are closer to either $A$ or $B$.

Now consider a non-spokesman node $u$ and the set of all spokesmen in its row, column and diagonals.
For $u$ not to be a spokesman, it must have one spokesman on each side of itself for its row, column and diagonals.
Let these spokesmen be labeled sequentially $u_1, u_2, \ldots, u_8$ in a clockwise order around the node $u$.
Consider a partitioning of the plane around $u$ by the set of half-lines starting at $u$ and
going through $u_1, u_2, \ldots, u_8$.
Call these plane regions the sectors $u_i\,u\,u_{i+1}$.

Since the distance between nodes is at least $\gamma$ and because of the geometry of the partition, we have that the angle of each sector $u_i\,u\,u_{(i+1)\mod 8}$ is less than $\pi/2$.
By the first part of the argument, the node $u$ is farther from any point in a sector $u_i\,u\,u_{(i+1)\mod 8}$, and outside the triangle $u_i\,u\,u_{(i+1)\mod 8}$, than the spokesmen $u_i$ and $u_{(i+1)\mod 8}$.
See Figure~\ref{fig:spokesman_domination}.
\begin{figure}[!htb]
\centering
\includegraphics[scale=0.75, bb = 0 0 112 108,angle=0]{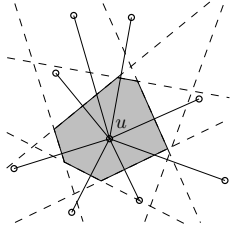}
\caption[Overall proximity of spokesmen]{Overall proximity of spokesmen: for all points outside of the gray region, there is always a spokesman that is closer than $u$.
For all points, there is always a spokesman that is farther than $u$.}
\label{fig:spokesman_domination}
\end{figure}
\end{proof}

\begin{lemma}
\label{l:farther}
For any point $p$ and any non-spokesman node $u$, there is always a spokesman node $v$ that is farther from $p$ than $u$.
\end{lemma}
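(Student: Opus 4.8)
The plan is to reuse the geometric configuration already set up in the proof of Lemma~\ref{l:closer} and to argue by an angular half-plane count rather than by containment in triangles. Since $u$ is a non-spokesman, along each of its row, its column, and its two diagonals of homes there is a message-possessing node on either side of $u$; these are exactly the eight spokesmen $u_1,\dots,u_8$, lying one per sector around $u$. As established in the proof of Lemma~\ref{l:closer}, each sector angle $\angle\,u_i\,u\,u_{(i+1)\bmod 8}$ is strictly less than $\pi/2$. Thus the eight spokesmen genuinely encircle $u$: their directions partition the full angle $2\pi$ into eight gaps, each smaller than $\pi/2$. I would open the proof by recalling this configuration and this gap bound.

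The key step is to show that, for any point $p$, at least one spokesman $u_i$ satisfies $\angle\,p\,u\,u_i \ge \pi/2$. If $p=u$ this is vacuous (every $u_i$ is then farther), so assume $p\neq u$ and let $\phi$ be the direction from $u$ to $p$. The directions forming an angle at least $\pi/2$ with $\phi$ constitute a closed arc of angular width $\pi$, and its complement is an open arc of width $\pi$. If no spokesman fell in the closed arc, then all eight would lie in the open arc of width $\pi$; but then, traversing the circle of directions around $u$, the empty complementary region would force a consecutive pair of spokesmen separated by an angular gap of at least $\pi$, contradicting the bound of $\pi/2$ on every gap inherited from Lemma~\ref{l:closer}. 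Hence some spokesman $u_i$ makes an angle at least $\pi/2$ with $p$ at $u$.

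It then remains only to convert this angular fact into a distance comparison, which is a single application of the law of cosines:
\[
 dist(p,u_i)^2 = dist(p,u)^2 + dist(u,u_i)^2 - 2\,dist(p,u)\,dist(u,u_i)\cos\big(\angle\,p\,u\,u_i\big).
\]
Since $\cos\big(\angle\,p\,u\,u_i\big)\le 0$ while $dist(u,u_i)\ge\gamma>0$, this gives $dist(p,u_i)^2 \ge dist(p,u)^2 + \gamma^2 > dist(p,u)^2$, so $u_i$ is strictly farther from $p$ than $u$, as required. I expect the only delicate point to be the justification that the eight spokesmen encircle $u$ with all angular gaps below $\pi/2$; but this is precisely the configuration already proved in Lemma~\ref{l:closer}, so here it can simply be invoked rather than re-derived. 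Everything else is the elementary half-plane counting argument followed by the single inequality above.
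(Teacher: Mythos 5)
Your proof is correct and takes essentially the same route as the paper's: the same configuration of eight spokesmen $u_1,\dots,u_8$ encircling $u$ with every angular gap below $\pi/2$, from which it follows that for every point some spokesman is farther. Your arc-counting argument plus the law-of-cosines step merely makes explicit the one-line covering claim the paper draws from Fact~\ref{f:halfplane} (that the union of the halfplanes $H_{u,u_i}$ covers the plane), so no changes are needed.
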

\begin{proof}
Consider a non-spokesman node $u$ and the set of all spokesmen in its row, column and diagonals.
For $u$ not to be a spokesman, it must have one spokesman on each side of itself for its row, column and diagonals.
Let these spokesmen be labeled sequentially $u_1, u_2, \ldots, u_8$ in a clockwise order around the node $u$.
Recall Fact~\ref{f:halfplane}.
Consider all halfplanes $H_{u,u_i}$, $i=1,2,\ldots,8$.
These halfplanes contain all those points to which $u$ is closer than $u_i$, or those from which $u_i$ is farther than $u$.
Since the distance between nodes is at least $\gamma$ and because of the geometry of the partition, we have that each angle $u_i\,u\,u_{(i+1)\mod 8}$ is less than $\pi/2$.
Therefore, the union of these halfplanes covers the entire plane.
See Figure~\ref{fig:spokesman_domination}.
\end{proof}

We now prove the main lemma of this section.

\begin{Proof}{ of Lemma~\ref{l:goodSpokesman}}
Fix a node $u$ inside the block $b$.
Fix a node $v$ neighbor of $u$.
If $u$ is a spokesman, we are done.
Otherwise, we must show that there is a spokesman $w$ that shares a link with $v$.

If $u$ is not a spokesman, then from Lemma~\ref{l:closer} and from Lemma~\ref{l:farther}, there is a spokesman $w$ that is closer to $v$ than $u$ and there is a spokesman $w'$ that is farther.
For some $\delta$, $w$ is at distance $dist(u,v) - \delta < dist(v,w) < dist(u,v)$ of $v$
and $w'$ is at distance $dist(u,v) < dist(v,w') < dist(u,v) + \delta$ from $v$.
Since $u$ shares a link with $v$, we know that $s < dist(u,v) < 1$.
Moreover, for $\delta < (1-s)/2$, either $s < dist(u,v) - \delta < 1$ or $s < dist(u,v) + \delta < 1$.
Since the diameter of a block is $(1-s)/3 < (1-s)/2$, at least one of $w$ and $w'$ shares a link with $v$.
\end{Proof}

\subsection{Broadcasting Algorithm $\cB^2$}
In order to complete the broadcasting algorithm, we need a final procedure to transmit the message $m$ from the source to all other nodes of the network.
We now describe Procedure $\cT^2$.
Procedure $\cT^2$ is executed in parallel for all regions.
Sequentially for all blocks, we have the set of spokesmen transmit the message $m$ on a turn basis.
Spokesmen send the message only once each and the procedure ends implicitly when the last message is sent.
More formally, refer to the pseudo code for Procedure $\cT^2$.

\begin{algorithm}{}
{\bf Procedure $\cT^2$}
\begin{algorithmic}
\STATE $S_u \leftarrow $the label of the $(block, home)$ containing $u$
\STATE {\bf In parallel for all regions}
\REPEAT
    \FOR{$block = 1..\mu$}
        \STATE update spokesman status
        \IF{$u$ is a spokesman AND $S_u = block$ AND $u$ has not sent the message}
            \FOR{valid row indices $i = 1,\ldots$}
                \STATE Spokesmen $E_i$, $W_i$ transmit the message $m$ in order
            \ENDFOR
            \FOR{valid column indices $i = 1,\ldots$}
                \STATE Spokesmen $N_i$, $S_i$ transmit the message $m$ in order
            \ENDFOR
            \FOR{valid diagonal indices $i = 1,\ldots$}
                \STATE Spokesmen $NE_i$, $SE_i$, $NW_i$, $SW_i$ transmit the message $m$ in order
            \ENDFOR
        \ELSE
            \STATE Listen to incoming messages until all spokesmen have transmitted
        \ENDIF
    \ENDFOR
\UNTIL{no node has transmitted in an iteration}
\end{algorithmic}
\end{algorithm}

\begin{lemma}
\label{l:ProcT2}
Procedure $\cT^2$ broadcasts the message correctly through the network in time $O(Dg/l)$.
\end{lemma}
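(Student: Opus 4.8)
The plan is to mirror the comparison argument used for Lemma~\ref{l:ProcT}. I would introduce the idealized network $G'$ on the same node set and link set as $G$, but in which a node may receive from several neighbors in a single round without collision, together with the flooding algorithm $\cF$ that retransmits a freshly received message to all its neighbors in the following round. Since $\cF$ informs every node within hop-distance $t$ of the source by round $t$, and $G'$ has diameter $D$, the algorithm $\cF$ terminates in $\Theta(D)$ rounds. I would then bound the running time of Procedure $\cT^2$ by exhibiting a per-hop slowdown factor relative to $\cF$.

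First I would compute the cost of a single iteration of the outer REPEAT loop. By the partition $\cP^2$, each region contains $\mu = \lceil 3/l \rceil^2 \in \Theta(1/l^2)$ blocks, and by the spokesman count established earlier, each block has at most $O(lg)$ spokesmen (at most two per row, column, and diagonal of homes, of which there are $O(lg)$ each). Reserving a fixed time slot for every potential spokesman position, so that all regions stay in lockstep regardless of how many spokesmen a given block currently holds, one iteration processes all $\mu$ blocks and therefore costs $\mu \cdot O(lg) = \Theta(1/l^2) \cdot O(lg) = O(g/l)$ rounds.

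Next I would argue collision-freeness and the correctness of a single iteration. Within one block the spokesmen transmit one at a time in the prescribed row/column/diagonal order, so no two spokesmen of the same block ever transmit simultaneously; across regions, Lemma~\ref{l:noCollide} (shown to hold for $\cP^2$) guarantees that simultaneous transmissions from identically labeled blocks in distinct regions do not collide, and since its proof shows that no point lies within distance $1$ of two such transmitters, no cross-region swamping of an intended receiver occurs either. Thus the hypothesis of Lemma~\ref{l:goodSpokesman} is met, and after the spokesmen of a block $b$ transmit, every node that is a neighbor of some message-holding node of $b$ receives $m$. I would then prove by induction on $t$ that after $t$ iterations every node within hop-distance $t$ of the source is informed: a node $v$ at hop-distance $t$ has an informed neighbor $u$ lying in some block $b$, and when $b$ is processed during iteration $t$, its spokesmen inform all neighbors of $u$, in particular $v$. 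Hence $O(D)$ iterations suffice, for a total of $O(D) \cdot O(g/l) = O(Dg/l)$ rounds.

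The main obstacle will be making the per-iteration advancement argument airtight in the presence of the dynamically updated spokesman set and the sequential (rather than truly parallel) processing of blocks within an iteration. I must verify that an informed node $u$ is always represented by a message-holding spokesman of its block, so that Lemma~\ref{l:goodSpokesman} applies to the neighbors of $u$, and that the West-to-East, North-to-South block ordering together with the ``update spokesman status'' step can only accelerate, never stall, propagation, so that one hop of $\cF$ is simulated by at most one REPEAT iteration. Confirming that the fixed per-block slot length keeps all parallel regions synchronized, so that the cross-region non-collision guarantee of Lemma~\ref{l:noCollide} remains valid throughout the procedure, is the other point requiring care.
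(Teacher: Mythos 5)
Your proposal is correct and follows essentially the same route as the paper's own proof: both compare the execution of Procedure $\cT^2$ on $G$ with the flooding algorithm $\cF$ on the idealized collision-free network $G'$ (terminating in $\Theta(D)$ rounds), derive the per-iteration cost $O(g/l)$ from the $\Theta(1/l^2)$ blocks per region each holding $O(lg)$ spokesmen, and invoke Lemma~\ref{l:noCollide} for collision-freeness together with Lemma~\ref{l:goodSpokesman} for coverage to conclude a per-hop slowdown of $O(g/l)$ and hence total time $O(Dg/l)$. If anything, your explicit induction on hop-distance and your flagged checks on slot synchronization and the dynamic spokesman set are more careful than the paper's argument, which asserts the slowdown factor without spelling out those details.
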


\begin{proof}
Consider a network $G$ of diameter $D$ built by the adversary under the swamping model.
Consider also the network $G'$ with the same nodes and links as $G$, but where nodes may receive messages from multiple neighbors in one round without collision.
Let the nodes of $G'$ execute the broadcasting algorithm $\cF$: when a node receives a message $m$ the first time, it transmits this message to all its neighbors the next round.
For the network $G'$, the algorithm $\cF$ executes in $\Theta(D)$ rounds.
We prove the lemma statement by comparing the execution of Procedure $\cT^2$ on $G$ to the execution of Algorithm $\cF$ on $G'$.

Since each region has $\lceil 3/l \rceil^2$ blocks, where $l = \max\{\gamma/\sqrt{2}, (1-s)/(3\sqrt{2})\}$ and since each block has $O(lg)$ spokesmen, the broadcast algorithm sequentially makes all spokesmen of a region communicate every $O(g/l)$ round.
From Lemma~\ref{l:noCollide}, the process is collision-free.
From Lemma~\ref{l:goodSpokesman}, the spokesmen of a block reach all the nodes that can be reached by any node on their block that do know the message $m$.
It then follows that the message $m$ being relayed through the network may be slowed down by a factor $O(g/l)$ with respect to the broadcast time of Algorithm $\cF$.
Hence, for any network $G$ of diameter $D$, the total transmission time is in $O(Dg/l)$.
\end{proof}

\begin{algorithm}{}
{\bf Algorithm $\cB^2$}
\begin{algorithmic}
\STATE {\bf In parallel} for all nodes
\STATE {\bf Call} Procedure $\cD^*$
\STATE {\bf Call} Procedure $\cT^2$
\end{algorithmic}
\end{algorithm}

\begin{Proof}{ of Theorem~\ref{t:algoB2}}
From Lemma \ref{l:ProcT2}, the time of execution of Procedure $\cT^2$ is $O(Dg/l)$.
From Lemma \ref{l:procD*}, the time of execution of Procedure $\cD^*$ is $\Theta(g^4)$.
Adding these times together, we get a total time of $O(Dg/l + g^4)$.
\end{Proof}

\section{Conclusion}
In this paper, we have shown algorithms for broadcasting under a novel communication model, the swamping communication model.
We have shown algorithms of optimal time complexity for the line and the grid.
We have also shown algorithms for broadcasting in networks of unknown topology, with nodes placed on the line, and in the plane.

In \cite{EGKPPS2009}, under the spontaneous wake up model\index{network!spontaneous wake up model}, where nodes may transmit from the beginning of the communication process, the authors combined two sub-optimal algorithms into one algorithm, which completes broadcasting in optimal time $O(\min(D + g^2, D \log g)$.
Comparatively, our algorithm is slower by a factor of, at least, $g/l$.
The reason for this slowdown is the presence of swamping and the complexity of the ensuing collision-avoidance broadcasting scheme.
Contrary to the cited work, it is not possible to select one node per region to transmit the message to all nearby nodes.
The lower bound on the time complexity for broadcasting in the presence of swamping remains open.

\section*{Acknowledgements}
Thanks go to Dr. Ioannis Lambadaris for suggesting the swamping paradigm and for useful conversations on this topic.
Thanks also go to Dr. Andrzej Pelc for useful conversations on the swamping communication model.

\bibliographystyle{plain}
\bibliography{biblio}

\end{document}